\documentclass{AHPCC-techreport}
\usepackage{latexsym}
\usepackage{epsf}
\usepackage{pstricks}
\usepackage{proof}

\newtheorem{lemma}{Lemma}[section]
\newtheorem{definition}[lemma]{Definition}
\newtheorem{theorem}[lemma]{Theorem}

\newtheorem{proposition}[lemma]{Proposition}

\def\ATS{\mbox{$\cal AT\kern-2pt S$}}

\def\D{{\cal D}}

\def\Linimp{\mbox{$\kern1pt=\kern-3pt\circ\;$}}

\def\Band{\land}
\def\Bimp{\supset}

\def\ST{\mbox{\it ST}}

\def\SlsegNone{\mbox{\it SlsegNone}}
\def\SlsegSome{\mbox{\it SlsegSome}}

\def\Timp{\Rightarrow}

\def\VTs{\overline{V\kern-1pt T}}

\def\Vand{\land}
\def\Vimp{\supset}
\def\Vimpzero{\Vimp_{0}}
\def\Vimpboth{\Vimp_{?}}
\def\VT{\mbox{\it V\kern-1pt T}}

\def\app{\mbox{\bf app}}
\def\atssv{\mbox{\it ATS/SV}}
\def\arrayView{\mbox{\it arrayView}}
\def\ArrayNone{\mbox{\it ArrayNone}}
\def\ArraySome{\mbox{\it ArraySome}}
\def\cI{\mbox{$c_I$}}
\def\cB{\mbox{$c_B$}}
\def\cond{\mbox{\bf if}}
\def\dc{\mbox{\it c}}
\def\dcc{\mbox{\it cc}}
\def\dcf{\mbox{\it cf\kern0.5pt}}
\def\dcfget{\mbox{\it get}}
\def\dcfgetPtr{\mbox{\it get\kern-0.5ptPtr}}
\def\dcfsetPtr{\mbox{\it set\kern-0.5ptPtr}}
\def\dcfgetPtrZ{\mbox{\it get\kern-0.5ptPtr}_0}
\def\dcfsetPtrZ{\mbox{\it set\kern-0.5ptPtr}_0}
\def\dcfgetRef{\mbox{\it get\kern-0.5ptRef}}
\def\dcfsetRef{\mbox{\it set\kern-0.5ptRef}}
\def\dcfmakePair{\mbox{\it makePair}}
\def\dom{\mbox{\bf dom}}
\def\dt{t}
\def\eDelta{\Delta^{l}}
\def\emptydctx{\emptyset}
\def\emptysctx{\emptyset}
\def\eforall{\forall^{-}}
\def\eguard#1{\supset^{-}\kern-2pt(#1)}
\def\erase#1{|#1|}
\def\falloc{\mbox{\it alloc}}
\def\ffalse{\mbox{\it false}}
\def\ffree{\mbox{\it free}}
\def\fix#1#2{\mbox{\bf fix}\;#1.#2}

\def\fread{\mbox{\it read}}
\def\fwrite{\mbox{\it write}}

\def\dcfgetFirst{\mbox{\it getFirst}}
\def\iforall{\forall^{+}}
\def\iguard#1{\supset^{+}\kern-2pt(#1)}

\def\linimp{\mbox{$-\kern-2pt\circ\;$}}
\def\lc{\mbox{\bf l}}
\def\letin#1#2{\mbox{\bf let}\;#1\;\mbox{\bf in}\;#2}
\def\lam#1#2{\mbox{\bf lam}\;#1.#2}
\def\lamsv{\lambda_{\it view}}
\def\lamsvplus{\lambda_{\it view}^{\forall,\exists}}
\def\pDelta{\Delta^{i}}
\def\pf{\mbox{\it pf\,}}
\def\pimp{\supset}
\def\pl{\underline{l}}
\def\prunit{{\mathbf 1}}
\def\pt{\underline{t}}
\def\pv{\underline{x}}
\def\sccnull{{\mathbf 0}}
\def\sccpair{\mbox{\it pair}}
\def\sccref{\mbox{\it ref}}

\def\scfsplit{\mbox{\it splitLemma}}
\def\scfunsplit{\mbox{\it unsplitLemma}}
\def\scfviewbox{\mbox{\it viewbox}}

\def\steval{\rightarrow_{\it ev/st}}

\def\saddr{\mbox{\it addr}}
\def\sbool{\mbox{\it bool}}
\def\sint{\mbox{\it int}}
\def\sllistView{\mbox{\it sllistView}}
\def\slsegView{\mbox{\it slsegView}}
\def\stype{\mbox{\it type}}
\def\sview{\mbox{\it view}}
\def\sviewtype{\mbox{\it viewtype}}
\def\subst#1#2#3{#3[#2\mapsto #1]}
\def\tBool{\mbox{\bf Bool}}
\def\tbool{\mbox{\bf bool}}
\def\temd{\models}
\def\timp{\rightarrow}
\def\timpzero{\rightarrow_{0}}
\def\timpboth{\rightarrow_{?}}
\def\tInt{\mbox{\bf Int}}
\def\tint{\mbox{\bf int}}
\def\tpjg{\vdash}
\def\tptr{\mbox{\bf ptr}}
\def\ttop{\mbox{\bf top}}
\def\ttrue{\mbox{\it true}}
\def\tunit{\mbox{\bf 1}}
\def\tuple#1{\langle #1\rangle}
\def\vB{\overline{B}}
\def\vimp{\mbox{$-\kern-2pt\circ\;$}}
\def\xf{\mbox{\it xf\/}}

\def\sqr#1#2{{\vcenter{\hrule height#2pt
     \hbox{\vrule width#2pt height#1pt \kern#1pt
        \vrule width#2pt}
     \hrule height#2pt}}}
\def\square{\kern2pt\raise0.25pt\hbox{$\mathchoice\sqr5{.6}\sqr5{.6}\sqr{4.1}{.5}\sqr{2.5}{.5}$}\kern2.5pt}

\def\fillsquare{\kern2pt\raise0.25pt
  \hbox{$\vcenter{\hrule height0pt \hbox{\vrule width5pt height5pt} \hrule height0pt}$}}
\newenvironment{proof}{{\bf Proof \/}\kern6pt}
{\unskip\nobreak\hfill\penalty50\kern4pt\hbox{}\nobreak\hfill\fillsquare\vskip6pt}

\title{To Memory Safety through Proofs\thanks{Partially supported by NSF grant
no. CCR-0229480}}

\author{%
Hongwei Xi and Dengping Zhu\\[4pt]
Computer Science Department\\[0pt]
Boston University\\[0pt]
}

\lhead{}
\rhead{{\bf {To Memory Safety through Proofs}}}
\chead{}

\begin{document}

\maketitle

\keywords{Stateful View, Viewtype, Applied Type System, ATS}
\begin{abstract}
We present a type system capable of guaranteeing the memory safety of
programs that may involve (sophisticated) pointer manipulation such as
pointer arithmetic. With its root in a recently developed framework {\em
Applied Type System} ($\ATS$), the type system imposes a level of
abstraction on program states through a novel notion of recursive stateful
views and then relies on a form of linear logic to reason about such
stateful views.  We consider the design and then the formalization of the
type system to constitute the primary contribution of the paper. In
addition, we also mention a running implementation of the type system and
then give some examples in support of the practicality of programming with
recursive stateful views.
\end{abstract}

\section{Introduction}\label{section:introduction}
The need for direct memory manipulation through pointers is essential in
many applications and especially in those that involve systems
programming. However, it is also commonly understood that the use (or
probably misuse) of pointers is often a rich source for program errors.  In
safe programming languages such as ML and Java, it is completely forbidden
to make explicit use of pointers and memory manipulation is done through
systematic allocation and deallocation. In order to cope with applications
requiring direct memory manipulation, these languages often provide a means
to interface with functions written in unsafe languages such as C. While
this is a workable design, the evident irony of this design is that
probably the most difficult part of programming must be done in a highly
error-prone manner with little, if there is any, support of types. This
design certainly diminishes the efforts to promote the use of safe
languages such as ML and Java.

We have previously presented a framework {\em Applied Type System}
($\ATS$) to facilitate the design and formalization of type systems in
support of practical programming. It is already demonstrated that
various programming styles (e.g., modular programming~\cite{ModTyp},
object-oriented programming~\cite{GRDC03,OBJwMIpadl04},
meta-programming~\cite{MPicfp03}) can be directly supported {\em
  within} $\ATS$ without resorting to {\em ad hoc} extensions. In this
paper, we extend $\ATS$ with a novel notion of recursive stateful
views, presenting the design and then the formalization of a type
system $\atssv$ that can effectively support the use of types in
capturing program invariants in the presence of pointers.  For
instance, the interesting invariant can be readily captured in
$\atssv$ that each node in a doubly linked binary tree points to its
children that point back to the node itself, and this is convincingly
demonstrated in an implementation of AVL trees and splay
trees\cite{ats-lang}.  Also, we have presented previously a less
formal introduction to $\atssv$~\cite{zx-padl05}, where some short
examples involving stateful views can be found.

There are a variety of challenging issues that we must properly address in
order to effectively capture invariants in programs that may make
(sophisticated) use of pointers such as pointer arithmetic.  First and
foremost, we employ a notion of stateful views to model memory layouts.
For instance, given a type $T$ and an address $L$, we can form a
(primitive) stateful view $T@L$ to mean that a value of type $T$ is stored
at the address $L$. We can also form new stateful views in terms of
primitive stateful views. For instance, given types $T_1$ and $T_2$ and an
address $L$, we can form a view $(T_1@L)\otimes(T_2@L+1)$ to mean that a
value of type $T_1$ and another value of type $T_2$ are stored at addresses
$L$ and $L+1$, respectively, where $L+1$ stands for the address immediately
after $L$. Intuitively, a view is like a type, but it is linear.  Given a
term of some view $V$, we often say that the term proves the view $V$ and
thus refer to the term as a {\em proof} (of $V$).  It will soon become
clear that proofs of views cannot affect the dynamics of programs and thus
are all erased before program execution.

\begin{figure}[t]
\fontsize{11}{12}\selectfont
\begin{center}\begin{minipage}{14.2cm}
\begin{verbatim}
dataview arrayView (type, int, addr) =
  | {a:type, l:addr} ArrayNone (a, 0, l)
  | {a:type, n:int, l:addr | n >= 0}
      ArraySome (a, n+1, l) of (a @ l, arrayView (a, n, l+1))
\end{verbatim}
\end{minipage}\end{center}
\caption{An example of recursive stateful view}
\label{figure:arrayView}
\end{figure}
In order to model more sophisticated memory layouts, we need to form
recursive stateful views. For instance, we may use the concrete syntax in
Figure~\ref{figure:arrayView} to declare a (dependent) view constructor
$\arrayView$: Given a type $T$, an integer $I$ and an address $L$,
$\arrayView(T,I,L)$ forms a view stating that there are $I$ values of type
$T$ stored at addresses $L,L+1,\ldots,L+I-1$. There are two proof
constructors $\ArrayNone$ and $\ArraySome$ associated with $\arrayView$,
which are formally assigned the following views:
\begin{center}
\[\begin{array}{rcl}
\ArrayNone & : & \forall\lambda.\forall\tau.() \linimp \arrayView (\tau,0,\lambda) \\
\ArraySome & : & \forall\lambda.\forall\tau.\forall\iota.
\iota\geq 0\Bimp(\tau@\lambda\otimes\arrayView(\tau,\iota,\lambda+1)\linimp \arrayView(\tau,\iota+1,\lambda))
\end{array}\]
\end{center}
Note that we use $\otimes$ and $\linimp$ for linear (multiplicative)
conjunction and implication and $\tau$, $\iota$ and $\lambda$ for variables
ranging over types, integers and addresses, respectively.  Intuitively,
$\ArrayNone$ is a proof of $\arrayView(T,0,L)$ for any type $T$ and address
$L$, and $\ArraySome(\pf_1,\pf_2)$ is a proof of $\arrayView(T,I+1,L)$ for
any type $T$, integer $I$ and address $L$ if $\pf_1$ and $\pf_2$ are proofs
of views $T@L$ and $\arrayView(T,I,L+1)$, respectively.

Given a view $V$ and a type $T$, we can form a {\em viewtype} $V\Vand T$
such that a value of the type $V\Vand T$ is a pair $(\pf, v)$ in which
$\pf$ is a proof of $V$ and $v$ is a value of type $T$. For instance, the
following type can be assigned to a function $\fread_L$ that reads from the
address $L$:
\[\begin{array}{c}
(T@L)\Vand\tptr(L) \timp (T@L)\Vand T
\end{array}\]
Note that $\tptr(L)$ is the singleton type for the only pointer pointing to
the address $L$. When applied to a value $(\pf_1, L)$ of type
$(T@L)\Vand\tptr(L)$, the function $\fread_L$ returns a value $(\pf_2, v)$,
where $v$ is the value of type $T$ that is supposed to be stored at
$L$. Both $\pf_1$ and $\pf_2$ are proofs of the view $T@L$, and we may
think that the call to $\fread_L$ consumes $\pf_1$ and then produces
$\pf_2$. Similarly, the following type can be assigned to a function
$\fwrite_{L}$ that writes a value of type $T_2$ to the address $L$ where a
value of type $T_1$ is originally stored:
\[\begin{array}{c}
(T_1@L)\Vand(\tptr(L) * T_2) \timp (T_2@L)\Vand \tunit
\end{array}\]
Note that $\tunit$ stands for the unit type. In general, we can assign the
read ($\dcfgetPtr$) and write ($\dcfsetPtr$) functions the following types:
\[\begin{array}{rcl}
\dcfgetPtr &~~:~~&
\forall\tau.\forall\lambda.(\tau@\lambda)\Vand\tptr(\lambda)\timp(\tau@\lambda)\Vand\tau
\\
\dcfsetPtr &~~:~~&
\forall\tau_1.\forall\tau_2.\forall\lambda.(\tau_1@\lambda)\Vand(\tptr(\lambda) * \tau_2)\timp(\tau_2@\lambda)\Vand\tunit
\\
\end{array}\]

\begin{figure}
\fontsize{9}{10}\selectfont
\begin{verbatim}
fun getFirst {a:type, n:int, l:addr | n > 0} (pf: arrayView (a,n,l) | p: ptr(l))
  : '(arrayView (a,n,l) | a) =
  let
     prval ArraySome (pf1, pf2) = pf // pf1: a@l and pf2: arrayView (a,n-1,l+1)
     val '(pf1' | x) = getPtr (pf1 | p) // pf1: a@l
  in
     '(ArraySome (pf1', pf2) | x)
  end

// The following is a proof function and thus is required to be total
prfun splitLemma {a:type, n:int, i:int, l:addr | 0 <= i, i <= n} .<i>.
     (pf: arrayView (a, n, l)): '(arrayView (a, i, l), arrayView (a, n-i, l+i)) =
  sif i == 0 then '(ArrayNone (), pf) // [sif]: static [if] for forming proofs
  else
    let
       prval ArraySome (pf1, pf2) = pf // this cannot fail as [i > 0] holds
       prval '(pf21, pf22) = splitLemma {a,n-1,i-1,l+1} (pf2)
    in
       '(ArraySome (pf1, pf21), pf22)
    end

fun get {a:type, n:int, i:int, l:addr | 0 <= i, i < n}
   (pf: arrayView (a, n, l) | p: ptr l, offset: int i): '(arrayView (a, n, l) | a) =
  let
     // pf1: arrayView (a,i,l) and pf2: arrayView (a,n-i,l+i)
     prval '(pf1, pf2) = splitLemma {a,n,i,l} (pf)
     val '(pf2 | x) = getFirst (pf2 | p + offset)
  in
     '(unsplitLemma (pf1, pf2) | x)
  end

\end{verbatim}
\caption{A programming example involving recursive stateful views}
\label{figure:arraySubscript}
\end{figure}
In order to effectively support programming with recursive stateful views,
we adopt a recently proposed design that combines programming with theorem
proving~\cite{CPwTP}.  While it is beyond the scope of the paper to formally
explain what this design is, we can readily use some examples to provide
the reader with a brief overview as to how programs and proofs are combined
in this design.  Also, these examples are intended to provide the reader
with some concrete feel as to what can actually be accomplished in
$\atssv$.  Of course, we need a process to elaborate programs written in
the concrete syntax of ATS into the (kind of) formal syntax of $\lamsvplus$
(presented in Section~\ref{section:extension}). This is a rather involved
process, and we unfortunately could not formally describe it in this paper
and thus refer the interested reader to~\cite{ATSdml} for further
details. Instead, we are to provide some (informal) explanation to
facilitate the understanding of the concrete syntax we use.

We have so far finished a running implementation of
ATS~\cite{ats-lang}, a programming language with its type system
rooted in the framework $\ATS$, and $\atssv$ is a part of the type
system of ATS.  In Figure~\ref{figure:arraySubscript}, we present some
code in ATS.  We use $'(\ldots)$ to form tuples in the concrete
syntax, where the quote symbol ($'$) is solely for the purpose of
parsing. For instance, $'()$ stands for the unit (i.e., the tuple of
length 0).  Also, the bar symbol ($\mid$) is used as a separator (like
the comma symbol ($,$)).  It should not be difficult to relate the
concrete syntax to the formal syntax of $\lamsvplus$ introduced later
in Section~\ref{section:extension} (assuming that the reader is
familiar with the SML syntax).  The header of the function
$\dcfgetFirst$ in Figure~\ref{figure:arraySubscript} indicates that
the following type is assigned to it:
\begin{center}
\[\begin{array}{l}
\forall\tau.\forall\iota.\forall\lambda.
\iota>0 \Bimp (\arrayView(\tau,\iota,\lambda)\Vand\tptr(\lambda) \timp \arrayView(\tau,\iota,\lambda)\Vand\tau)
\end{array}\]
\end{center}
where $\iota>0$ is a guard to be explained later.  Intuitively, when applied to
a pointer that points to a nonempty array, $\dcfgetFirst$ takes out the
first element in the array.  In the body of $\dcfgetFirst$, $\pf$ is a
proof of the view $\arrayView(a,n,l)$, and it is guaranteed to be of the
form $\ArraySome(\pf_1,\pf_2)$, where $\pf_1$ and $\pf_2$ are proofs of
views $a@l$ and $\arrayView(a,n-1,l+1)$, respectively; thus $\pf\,'_1$ is
also a proof of $\tau@\lambda$ and $\ArraySome(\pf\,'_1,\pf_2)$ is a proof
of $\arrayView(a,n,l)$.  In the definition of $\dcfgetFirst$, we have both
code for dynamic computation and code for static manipulation of proofs of
views, and the latter is to be erased before dynamic computation
starts. For instance, the definition of $\dcfgetFirst$ turns into:
\begin{verbatim}
fun getFirst (p) = let val x = getPtr p in x end
\end{verbatim}
after the types and proofs in it are erased; so the function can
potentially be compiled into one load instruction after it is inlined.

We immediately encounter an interesting phenomenon when attempting to
implement the usual array subscripting function $\dcfget$ of the following
type:
\begin{center}
\[\begin{array}{l}
\forall\tau.\forall n:\sint.\forall i:\sint.\forall\lambda.\\
\kern12pt
0\leq i \land i<n\Bimp
(\arrayView(\tau,n,\lambda)\Vand(\tptr(\lambda)*\tint(i)) \timp \arrayView(\tau,n,\lambda)\Vand\tau)
\end{array}\]
\end{center}
where $\tint(I)$ is a singleton type for the integer equal to $I$.  This
type simply means that $\dcfget$ is expected to return a value of type $T$
when applied to a pointer and a natural number such that the pointer points
to an array whose size is greater than the natural number and each element
in the array is of type $T$. Obviously, for any $0\leq i \leq n$, an array
of size $n$ at address $L$ can be viewed as two arrays: one of size $i$ at
$L$ and the other of size $n-i$ at $L+i$. This is what we call {\em view
change}, which is often done implicitly and informally (and thus often
incorrectly) by a programmer. In Figure~\ref{figure:arraySubscript}, the
proof function $\scfsplit$ is assigned the following functional view:
\begin{center}
\[\begin{array}{l}
\forall\tau.\forall n:\sint.\forall i:\sint.\forall\lambda.\\
\kern12pt
0\leq i\land i\leq n \Bimp (\arrayView(\tau,n,\lambda) \linimp \arrayView(\tau,i,\lambda)\otimes\arrayView(\tau,n-i,\lambda+i)) \\
\end{array}\]
\end{center}
Note that $\scfsplit$ is recursively defined and the termination metric
$\tuple{i}$ is used to verify that $\scfsplit$ is terminating. Please
see~\cite{XiHOSC02} for details on such a termination verification
technique.  To show that $\scfsplit$ is a total function, we also need to
verify the following pattern matching in its body:
\begin{verbatim}
                  prval ArraySome (pf1, pf2) = pf
\end{verbatim}
can never fail. Similarly, we can also define a total function
$\scfunsplit$ that proves the following view:
\begin{center}
\[\begin{array}{l}
\forall\tau.\forall n:\sint.\forall i:\sint.\forall\lambda.\\
\kern12pt
0\leq i\land i\leq n\Bimp (\arrayView(\tau,i,\lambda)\otimes\arrayView(\tau,n-i,\lambda+i)\linimp \arrayView(\tau,n,\lambda)) \\
\end{array}\]
\end{center}
With both $\scfsplit$ and $\scfunsplit$ to support view changes, an
$O(1)$-time array subscripting function is implemented in
Figure~\ref{figure:arraySubscript}. The definition of $\dcfget$ turns into:
\begin{verbatim}
fun get (p, i) = let val x = getFirst (p + i) in x end
\end{verbatim}
after the types and proofs in it are erased; so the function can
potentially be compiled into one load instruction after it is inlined.

We organize the rest of the paper as follows. In
Section~\ref{section:formal_development}, we formalize a language $\lamsv$
in which views, types and viewtypes are all supported. We then briefly
mention in Section~\ref{section:extension} an extension $\lamsvplus$ of
$\lamsv$ in which we support dependent types as well as polymorphic types,
and we also present some examples to show how views can be used in
practical programming.  In Section~\ref{section:PSV}, we present an
overview of the notion of persistent stateful views, which is truly
indispensable in practical programming.  We use $\atssv$ essentially for
the type system that extends $\lamsvplus$ with persistent stateful views.
Lastly, we mention some related work and conclude.

\begin{figure}[thp]
\[\begin{array}{lrcl}
\mbox{addresses} & L & ::= & \lc_0 \mid \lc_1 \mid \ldots \\
\mbox{views} & V & ::= & T@L \mid \prunit \mid V_1\otimes V_2 \mid V_1\vimp V_2 \\
\mbox{proof terms} & \pt & ::= & \pv \mid \pl \mid \tuple{\pt_1,\pt_2} \mid
\letin{\tuple{\pv_1,\pv_2}=\pt_1}{\pt_2} \mid \lambda \pv.\pt \mid \pt_1(\pt_2) \\
\mbox{proof var. ctx.} & \Pi & ::= & \emptysctx \mid \Pi,a:V \\
\\
\mbox{types} & T & ::= & \tBool \mid \tInt \mid \tptr(L) \mid \tunit \mid
V\Vimp\VT \mid T*T \mid \VT\timp\VT \\
\mbox{viewtypes} & \VT & ::= & \tBool \mid \tInt \mid \tptr(L) \mid \tunit \mid V\Vand\VT \mid
V\Vimpzero\VT \mid V\Vimp\VT \mid \\
&&&\VT_1 * \VT_2 \mid  \VT\timpzero\VT \mid \VT\timp\VT \\
\mbox{dyn. terms} & \dt &::= &
x \mid f \mid \dcc(\dt_1,\ldots,\dt_n) \mid
\dcf(\dt_1,\ldots,\dt_n) \mid \\
& & & \fread (\pt,\dt) \mid \fwrite (\pt,\dt_1,\dt_2) \mid \\
& & & \cond(\dt_1,\dt_2,\dt_3) \mid \fread(\pt,\dt) \mid \fwrite(\pt,\dt_1,\dt_2) \mid \\
& & & \pt\Vand \dt \mid \letin {\pv\Vand x=\dt_1}{\dt_2} \mid \lambda\pv.v \mid \dt(\pt) \mid \\
& & & \tuple{} \mid \tuple{\dt_1,\dt_2} \mid \letin{\tuple{x_1,x_2}=\dt_1}{\dt_2} \mid \\
& & & \lam{x}{\dt} \mid \app(\dt_1,\dt_2) \mid \fix{f}{\dt} \\
\mbox{values} & v & ::= & x \mid \dcc(v_1,\ldots,v_n) \mid \pt\Vand v \mid \lambda\pv.v \mid \tuple{v_1,v_2} \mid \lam{x}{d} \\
\mbox{int. dyn. var. ctx.} & \pDelta & ::= & \emptydctx \mid \pDelta,x:T \\
\mbox{lin. dyn. var. ctx.} & \eDelta & ::= & \emptydctx \mid \eDelta,x:\VT \\
\mbox{dyn. var. ctx.} & \Delta & ::= & (\pDelta;\eDelta) \\
\\
\mbox{state types} & \mu & ::= & [] \mid \mu[l\mapsto T] \\
\mbox{states} & \ST & ::= & [] \mid \ST[l\mapsto v] \\
\end{array}\]
\caption{The syntax for $\lamsv$}
\label{figure:syntax}
\end{figure}
\section{Formal Development}\label{section:formal_development}
In this section, we formally present a language $\lamsv$ in which the type
system supports views, types and viewtypes. The main purpose of formalizing
$\lamsv$ is to allow for a gentle introduction to unfamiliar concepts such
as view and viewtype. To some extent, $\lamsv$ can be compared to the
simply typed lambda-calculus, which forms the core of more advanced typed
lambda-calculi.  We will later extend $\lamsv$ to $\lamsvplus$ with
dependent types as well as polymorphic types, greatly facilitating the use
of views and viewtypes in programming.

The syntax of $\lamsv$ is given in Figure~\ref{figure:syntax}.
We use $V$ for views and $L$ for addresses.  We use $\lc_0,\lc_1,\ldots$ for
infinitely many distinct constant addresses, which one may assume to be
represented as natural numbers. Also, we write $l$ for a constant address.
We use $\pv$ for proof variables and $\pt$ for proof terms. For each
address $l$, $\pl$ is a constant proof term, whose meaning is to become
clear soon. We use $\Pi$ for a proof variable context, which assigns views
to proof variables.

We use $T$ and $\VT$ for types and viewtypes, respectively. Note that a
type $T$ is just a special form of viewtype. We use $\dt$ for dynamic terms
(that is, programs) and $v$ for values. We write $\pDelta$ ($\eDelta$) for
an intuitionistic (a linear) dynamic variable context, which assign types
(viewtypes) to dynamic variables, and $\Delta$ for a (combined) dynamic
context of the form $(\pDelta;\eDelta)$.  Given $\Delta=(\pDelta;\eDelta)$,
we may use $\Delta,x:\VT$ for $(\pDelta;\eDelta,x:\VT)$; in case the
viewtype $\VT$ is actually a type, we may also use $\Delta,x:\VT$ for
$(\pDelta,x:\VT;\eDelta)$. In addition, given
$\Delta_1=(\pDelta;\eDelta_1)$ and $\Delta_2=(\pDelta;\eDelta_2)$, we write
$\Delta_1\uplus\Delta_2$ for $(\pDelta;\eDelta_1,\eDelta_2)$.

We use $x$ and $f$ for dynamic {\bf lam}-variables and {\bf fix}-variables,
respectively, and $\xf$ for either $x$ or $f$; a {\bf lam}-variable is a
value while a {\bf fix}-variable is not.  We use $c$ for a dynamic
constant, which is either a function $\dcf$ or a constructor $\dcc$.  Each
constant is given a constant type (or c-type) of the form
$(T_1,\ldots,T_n)\Timp T$, where $n$ is the arity of $c$.  We may write
$\dcc$ for $\dcc()$. For instance, each address $l$ is given the c-type
$()\Timp\tptr(l)$; each boolean value is given the c-type $()\Timp\tBool$;
each integer is given the c-type $()\Timp\tInt$; the equality function on
integers can be given the c-type: $(\tInt,\tInt)\Timp\tBool$.  Also, we use
$\tuple{}$ for the unit and $\tunit$ for the unit type.

We use $\mu$ and $\ST$ for state types and states, respectively.  A state
type maps constant addresses to types while a state maps constant addresses
to values. We use $[]$ for the empty mapping and $\mu[l\mapsto T]$ for the
mapping that extends $\mu$ with a link from $l$ to $T$. It is implicitly
assumed that $l$ is not in the domain $\dom(\mu)$ of $\mu$ when
$\mu[l\mapsto T]$ is formed.  Given two state types $\mu_1$ and $\mu_2$
with disjoint domains, we write $\mu_1\otimes\mu_2$ for the standard union
of $\mu_1$ and $\mu_2$. Similar notations are also applicable to states.
In addition, we write $\temd\ST:\mu$ to mean that $\ST(l)$ can be assigned
the type $\mu(l)$ for each $l\in\dom(\ST)=\dom(\mu)$.

We now present some intuitive explanation for certain unfamiliar forms of
types and viewtypes. In $\lamsv$, types are just a special form of
viewtypes. If a dynamic value $v$ is assigned a type, then it consumes no
resources to construct $v$ and thus $v$ can be duplicated.  For instance,
an integer constant $i$ is a value.\footnote{ In particular, we emphasize
that there are simply no ``linear'' integer values in $\lamsv$.}  On the
other hand, if a value $v$ is assigned a viewtype, then it may consume some
resources to construct $v$ and thus $v$ is not allowed to be duplicated.
For instance, the value $(\pl,l)$ can be assigned the viewtype
$(\tInt@l)\Vand\tptr(l)$, which is essentially for a pointer pointing to an
integer; this value contains the resource $\pl$ and thus cannot be
duplicated.
\begin{itemize}
\item The difference between $V\Vimpzero\VT$ and $V\Vimp\VT$
is that the former is a viewtype but not a type while the latter is a type
(and thus a viewtype as well). For instance, the following type
\[\begin{array}{c}
(T_1@L_1)\Vand((T_2@L_2)\Vand(\tptr(L_1)*\tptr(L_2)))\timp
(T_1@L_2)\Vand((T_1@L_2)\Vand\tunit)
\end{array}\]
can be assigned to the function that swaps the contents stored at $L_1$ and
$L_2$. This type is essentially equivalent to the following one:
\[\begin{array}{c}
(T_1@L_1)\Vimp((T_2@L_2)\Vimpzero(\tptr(L_1)*\tptr(L_2)\timp
(T_1@L_2)\Vand((T_1@L_2)\Vand\tunit)))
\end{array}\]
where both $\Vimpzero$ and $\Vimp$ are involved.

\item The difference between $\VT\timpzero\VT$ and $\VT\timp\VT$
is rather similar to that between $V\Vimpzero\VT$ and $V\Vimp\VT$.

\item
In the current implementation of $ATS$, viewtypes of either the form
$V\Vimpzero T$ or the form $\VT\timpzero\VT$ are not directly supported
(though they may be in the future). However, as far as formalization of
viewtypes is concerned, we feel that eliminating such viewtypes would seem
rather {\em ad hoc}.

\end{itemize}

\begin{figure}
\[\begin{array}{c}
\infer[\mbox{\bf(vw-addr)}]
      {\emptyset\tpjg_{[l\mapsto T]} \pl:T@l}{}
\kern18pt
\infer[\mbox{\bf(vw-var)}]
      {\emptyset,\pv:V\tpjg_{[]} \pv:V}{} \\[4pt]

\infer[\mbox{\bf(vw-unit)}]
      {\emptyset\tpjg_{[]} \tuple{}:\prunit}{}
\kern18pt
\infer[\mbox{\bf(vw-tup)}]
      {\Pi_1,\Pi_2\tpjg_{\mu_1\otimes\mu_2} \tuple{\pt_1,\pt_2}:V_1\otimes V_2}
      {\Pi_1\tpjg_{\mu_1} \pt_1: V_1 & \Pi_2\tpjg_{\mu_2} \pt_2: V_2} \\[4pt]

\infer[\mbox{\bf(vw-let)}]
      {\Pi_1,\Pi_2\tpjg_{\mu_1\otimes\mu_2} \letin{\tuple{\pv_1,\pv_2}=\pt_1}{\pt_2}:V}
      {\Pi_1\tpjg_{\mu_1} \pt_1: V_1\otimes V_2 & \Pi_2,\pv_1:V_1,\pv_2:V_2\tpjg_{\mu_2} \pt_2: V} \\[4pt]

\infer[\mbox{\bf(vw-lam)}]
      {\Pi\tpjg_{\mu}\lambda\pv.\pt:V_1\vimp V_2}
      {\Pi,\pv:V_1\tpjg_{\mu} \pt: V_2}
\kern18pt
\infer[\mbox{\bf(vw-app)}]
      {\Pi_1,\Pi_2\tpjg_{\mu_1\otimes\mu_2} \pt_1(\pt_2):V_2}
      {\Pi_1\tpjg_{\mu_1} \pt_1: V_1\vimp V_2 & \Pi_2\tpjg_{\mu_2} \pt_2: V_1} \\[4pt]
\end{array}\]
\caption{The rules for assigning views to proof terms}
\label{figure:view_rules}
\end{figure}
The rules for assigning views to proofs are given in
Figure~\ref{figure:view_rules}. So far only logic constructs in the
multiplicative fragment of intuitionistic linear logic are involved in
forming views, and we plan to handle logic constructs in the additive
fragment of intuitionistic linear logic in future if such a need, which we
have yet to encounter in practice, occurs.  A judgment of the form
$\Pi\tpjg_{\mu}\pt:V$ means that $\pt$ can be assigned the view $V$ if the
variables and constants in $\pt$ are assigned views according to $\Pi$ and
$\mu$, respectively.

\begin{figure}[thp]
\[\begin{array}{c}
\infer[\mbox{\bf(ty-var)}]
      {\emptyset;(\pDelta;\emptyset),x:\VT\tpjg_{[]} x:\VT}
      {} \\[4pt]

\infer[\mbox{\bf(ty-cst)}]
      {\Pi_1,\ldots,\Pi_n;\Delta_1\uplus\ldots\uplus\Delta_n\tpjg_{\mu_1\otimes\ldots\otimes\mu_n} \dc(t_1,\ldots,t_n):T}
      {\tpjg c:(T_1,\ldots,T_n)\Timp T &
       \Pi_i;\Delta_i\tpjg_{\mu_i} t_i:T_i~~\mbox{for $1\leq i\leq n$}} \\[4pt]

\infer[\mbox{\bf(ty-if)}]
      {\Pi_1,\Pi_2;\Delta_1\uplus\Delta_2\tpjg_{\mu_1\otimes\mu_2}\cond(\dt_1,\dt_2,\dt_3):\VT}
      {\Pi_1;\Delta_1\tpjg_{\mu_1}\dt_1:\tBool &
       \Pi_2;\Delta_2\tpjg_{\mu_2}\dt_2:\VT &
       \Pi_2;\Delta_2\tpjg_{\mu_2}\dt_3:\VT } \\[4pt]

\infer[\mbox{\bf(ty-vtup)}]
      {\Pi_1,\Pi_2;\Delta\tpjg_{\mu_1\otimes\mu_2} \tuple{\pt,\dt}:V\Vand\VT}
      {\Pi_1\tpjg_{\mu_1} \pt:V & \Pi_2;\Delta\tpjg_{\mu_2} \dt:\VT} \\[4pt]

\infer[\mbox{\bf(ty-unit)}]
      {\emptyset;(\pDelta;\emptyset)\tpjg_{[]} \tuple{}:\tunit}
      {}
\kern18pt
\infer[\mbox{\bf(ty-tup)}]
      {\Pi_1,\Pi_2;\Delta_1\uplus\Delta_2\tpjg_{\mu_1\otimes\mu_2} \tuple{\dt_1,\dt_2}:\VT_1 * \VT_2}
      {\Pi_1;\Delta_1\tpjg_{\mu_1}\dt_1:\VT_1 &
       \Pi_2;\Delta_2\tpjg_{\mu_2}\dt_2:\VT_2} \\[4pt]

\infer[\mbox{\bf(ty-let)}]
      {\Pi_1,\Pi_2;\Delta_1\uplus\Delta_2\tpjg_{\mu_1\otimes\mu_2}\letin{\tuple{x_1,x_2}=\dt_1}{\dt_2}:\VT}
      {\Pi_1;\Delta_1\tpjg_{\mu_1} \dt_1:\VT_1 * \VT_2 &
       \Pi_2;\Delta_2,x_1:\VT_1,x_2:\VT_2\tpjg_{\mu_2} \dt_2:\VT} \\[4pt]

\infer[\mbox{\bf(ty-vlam0)}]
      {\Pi;\Delta\tpjg_{\mu}\lambda\pv.v:V\Vimpzero\VT}
      {\Pi,\pv:V;\Delta\tpjg_{\mu} v:\VT}
\kern18pt
\infer[\mbox{\bf(ty-vlam)}]
      {\emptyset;(\pDelta;\emptyset)\tpjg_{[]} \lambda\pv.v:V\Vimp\VT}
      {\emptyset,\pv:V;(\pDelta;\emptyset)\tpjg_{[]} v:\VT} \\[4pt]

\infer[\mbox{\bf(ty-vapp)}]
      {\Pi_1,\Pi_2;\Delta\tpjg_{\mu_1\otimes\mu_2} \dt(\pt):\VT}
      {\Pi_1;\Delta\tpjg_{\mu_1} \dt:V\Vimpboth\VT & \Pi_2\tpjg_{\mu_2} \pt:V}  \\[4pt]

\infer[\mbox{\bf(ty-lam0)}]
      {\Pi;\Delta\tpjg_{\mu}\lam{x}{\dt}:\VT_1 \timpzero \VT_2}
      {\Pi;\Delta,x:\VT_1\tpjg_{\mu} \dt:\VT_2}
\kern18pt
\infer[\mbox{\bf(ty-lam)}]
      {\emptyset;(\pDelta;\emptyset)\tpjg_{[]}\lam{x}{\dt}:\VT_1\timp\VT_2}
      {\emptyset;(\pDelta;\emptyset),x:\VT_1\tpjg_{[]} \dt:\VT_2} \\[4pt]

\infer[\mbox{\bf(ty-app)}]
      {\Pi_1,\Pi_2;\Delta_1\uplus\Delta_2\tpjg_{\mu_1\otimes\mu_2} \app(\dt_1,\dt_2):\VT_2}
      {\Pi_1;\Delta_1\tpjg_{\mu_1} \dt_1:\VT_1\timpboth\VT_2 &
       \Pi_2;\Delta_2\tpjg_{\mu_2} \dt_2:\VT_1}  \\[4pt]

\infer[\mbox{\bf(ty-vlet)}]
      {\Pi_1,\Pi_2;\Delta_1\uplus\Delta_2\tpjg_{\mu_1\otimes\mu_2}\letin{\pv\Vand x=\dt_1}{\dt_2}:\VT_2}
      {\Pi_1;\Delta_1\tpjg_{\mu_1} \dt_1:V\Vand \VT_1 &
       \Pi_2,\pv:V;\Delta_2,x:\VT_1\tpjg_{\mu_2} \dt_2:\VT_2} \\[4pt]

\infer[\mbox{\bf(ty-fix)}]
      {\emptyset;(\pDelta;\emptyset)\tpjg_{[]} \fix{f}{\dt}:T}
      {\emptyset;(\pDelta,f:T;\emptyset)\tpjg_{[]} \dt:T} \\[4pt]

\infer[\mbox{\bf(ty-read)}]
      {\Pi_1,\Pi_2;\Delta\tpjg_{\mu_1\otimes\mu_2}\;\fread(\pt,\dt):(T@L)\Vand T}
      {\Pi_1\tpjg_{\mu_1} \pt: T@L &
       \Pi_2;\Delta\tpjg_{\mu_2} \dt:\tptr(L)} \\[4pt]

\infer[\mbox{\bf(ty-write)}]
      {\Pi_1,\Pi_2,\Pi_3;\Delta\tpjg_{\mu_1\otimes\mu_2\otimes\mu_3}\;\fwrite(\pt,\dt_1,\dt_2):(T'@L)\Vand\tunit}
      {\Pi_1\tpjg_{\mu_1} \pt: T@L &
       \Pi_2;\Delta\tpjg_{\mu_2}\dt_1:\tptr (L) &
       \Pi_3\tpjg_{\mu_3} \dt_2: T'} \\[4pt]
\end{array}\]
\caption{The rules for assigning viewtypes to dynamic terms}
\label{figure:viewtype_rules}
\end{figure}
The rules for assigning viewtypes (which include types) are given in
Figure~\ref{figure:viewtype_rules}. We use $\Vimpboth$ for $\Vimp$ or
$\Vimpzero$ in the rule $\mbox{\bf(ty-vapp)}$ and $\timpboth$ for $\timp$
or $\timpzero$ in the rule $\mbox{\bf(ty-app)}$.  Intuitively, a type of
the form $V\Vimpboth\VT$ is for functions from proofs of view $V$ to values
of viewtype $\VT$. Similarly, a type of the form $\VT_1\timpboth\VT_2$ is
for functions from values of viewtype $\VT_1$ to values of viewtype
$\VT_2$.

\begin{proposition}\label{proposition:value}
Assume that $\emptyset;(\emptyset;\emptyset)\tpjg_{\mu} v:T$ is
derivable. Then the state type $\mu$ must equal $[]$.
\end{proposition}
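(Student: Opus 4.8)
The plan is to argue by induction on the derivation $\D$ of $\emptyset;(\emptyset;\emptyset)\tpjg_{\mu} v:T$, performing a case analysis on the rule at its root. Because $v$ is a value, its outermost shape restricts the root rule to those whose subject is a value form: \textbf{(ty-var)} for $x$, \textbf{(ty-cst)} for $\dcc(v_1,\ldots,v_n)$, \textbf{(ty-vtup)} for $\pt\Vand v$, \textbf{(ty-vlam0)} or \textbf{(ty-vlam)} for $\lambda\pv.v$, \textbf{(ty-tup)} for $\tuple{v_1,v_2}$, \textbf{(ty-lam0)} or \textbf{(ty-lam)} for $\lam{x}{\dt}$, together with \textbf{(ty-unit)} for the unit $\tuple{}$. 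The organizing observation is that the hypothesis ``$T$ is a type'' (rather than a proper viewtype) is exactly what forbids the rules capable of injecting resources into the conclusion, so that every admissible root rule either pins $\mu$ to $[]$ outright or reduces the goal to strictly smaller value typings whose result types are again types.

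First I would clear the cases that settle $\mu=[]$ without any appeal to the induction hypothesis. The rule \textbf{(ty-var)} cannot apply, since its conclusion requires a nonempty variable context whereas ours is $(\emptyset;\emptyset)$. The rules \textbf{(ty-unit)}, \textbf{(ty-vlam)} and \textbf{(ty-lam)} each carry the state type $[]$ in their conclusion, so they give $\mu=[]$ at once. I would then exclude the three rules that can produce a nonempty state type directly, namely \textbf{(ty-vtup)}, \textbf{(ty-vlam0)} and \textbf{(ty-lam0)}: their conclusion types have the forms $V\Vand\VT$, $V\Vimpzero\VT$ and $\VT_1\timpzero\VT_2$, none of which is generated by the grammar of types in Figure~\ref{figure:syntax}. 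Since we assume $v:T$ with $T$ a type, none of these can be the root of $\D$; and these are precisely the value forms ($\pt\Vand v$ carrying an embedded proof, and the $\Vimpzero$/$\timpzero$ abstractions deferring a resourceful body) whose presence would otherwise force $\mu\neq[]$.

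The two remaining cases, \textbf{(ty-cst)} and \textbf{(ty-tup)}, are where the induction does its work. In each, the conclusion's contexts are split among the premises, and since $\emptyset$ and $(\emptyset;\emptyset)$ admit only the trivial split into empty pieces, every premise inherits empty proof and dynamic contexts. Moreover each immediate subterm $v_i$ is again a value, and each premise assigns it a \emph{type}: in \textbf{(ty-cst)} the argument types in the c-type $(T_1,\ldots,T_n)\Timp T$ are types by definition, and in \textbf{(ty-tup)} the conclusion $\VT_1*\VT_2$ is a type only when both $\VT_1$ and $\VT_2$ are themselves types. The induction hypothesis therefore applies to every premise and yields $\mu_i=[]$, so the combined state type $\mu_1\otimes\cdots\otimes\mu_n$ is $[]$ as well. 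I expect the main obstacle to be exactly this syntactic bookkeeping: one must verify from the grammar that the three excluded rules all have proper-viewtype conclusions, and that the product rule propagates ``being a type'' down to its components, so that the induction hypothesis is legitimately applicable to the subderivations. Once these points are checked, the induction closes routinely.
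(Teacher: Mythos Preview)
Your proposal is correct and is essentially a detailed unfolding of the paper's own proof, which consists of the single line ``By a careful inspection of the rules in Figure~\ref{figure:viewtype_rules}.'' Your induction on the derivation with case analysis on the root rule is exactly what that inspection amounts to, and your key observations---that the rules \textbf{(ty-vtup)}, \textbf{(ty-vlam0)}, \textbf{(ty-lam0)} are excluded because their conclusion forms $V\Vand\VT$, $V\Vimpzero\VT$, $\VT_1\timpzero\VT_2$ are not generated by the type grammar, and that \textbf{(ty-cst)} and \textbf{(ty-tup)} propagate the type hypothesis to their premises---are precisely the content of that inspection.
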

\begin{proof}
By a careful inspection of the rules in Figure~\ref{figure:viewtype_rules}.
\end{proof}
Though simple, Proposition~\ref{proposition:value} is of great importance.
Intuitively, the proposition states that if a closed value is assigned a
type $T$, then the value can be constructed without consuming resources (in
the sense of proof constants $\pl$) and thus is allowed to be duplicated.

We use a judgment of the form $\ST\temd V$ to mean that the state $\ST$
entails the view $V$. The rules for deriving such judgments are given
below:
\[\begin{array}{c}
\infer[]
      {[l\mapsto v]\temd T@l}
      {\emptyset;(\emptyset;\emptyset)\tpjg_{[]} v:T}
\kern10pt
\infer[]
      {\ST_1\otimes\ST_2\temd V_1\otimes V_2}
      {\ST_1\temd V_1 & \ST_2\temd V_2}
\kern10pt
\infer[]
      {\ST\temd V_1\vimp V_2}
      {\ST_0\otimes\ST\temd V_2~\mbox{for each $\ST_0\temd V_1$}} \\[4pt]
\end{array}\]

\begin{lemma}\label{lemma:viewing}
Assume $\Pi\tpjg_{\mu} \pt:V$ is derivable for
$\Pi=\pv_1:V_1,\ldots,\pv_n:V_n$.  If
$\ST=\ST_0\otimes\ST_1\otimes\ldots\otimes\ST_n$ and $\temd\ST_0:\mu$ and $\ST_i\temd
V_i$ for $1\leq n$, then $\ST\temd V$ holds.
\end{lemma}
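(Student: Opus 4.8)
The plan is to proceed by structural induction on the derivation of $\Pi\tpjg_{\mu}\pt:V$, treating one rule of Figure~\ref{figure:view_rules} at a time and showing in each case how the stipulated decomposition $\ST=\ST_0\otimes\ST_1\otimes\ldots\otimes\ST_n$ (with $\temd\ST_0:\mu$ and $\ST_i\temd V_i$) can be rearranged so that a derivation of $\ST\temd V$ is assembled from the three entailment rules and the induction hypotheses. Two bookkeeping facts will be used throughout. First, whenever $\temd\ST_0:\mu$ with $\mu=\mu_1\otimes\mu_2$, the state $\ST_0$ splits as $\ST_0^1\otimes\ST_0^2$ by restricting to the (disjoint) domains $\dom(\mu_1)$ and $\dom(\mu_2)$, with $\temd\ST_0^k:\mu_k$. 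Second, whenever a rule splits the proof context as $\Pi_1,\Pi_2$, the proof-variable states $\ST_1,\ldots,\ST_n$ partition accordingly into bundles $\hat\ST_1$ and $\hat\ST_2$ collecting the states of the variables in $\Pi_1$ and in $\Pi_2$.

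The base cases are immediate. For $\mbox{\bf(vw-addr)}$ we have $n=0$ and $\mu=[l\mapsto T]$, so $\temd\ST_0:[l\mapsto T]$ forces $\ST_0=[l\mapsto v]$ with $v$ typable at $T$; by Proposition~\ref{proposition:value} this typing uses the empty state type, and the entailment rule for $T@l$ then yields $\ST_0\temd T@l$. For $\mbox{\bf(vw-var)}$ and $\mbox{\bf(vw-unit)}$ the premise $\temd\ST_0:[]$ forces $\ST_0=[]$, so $\ST$ reduces to the single entailing state $\ST_1$ in the former case and to the empty state entailing $\prunit$ in the latter. The multiplicative cases $\mbox{\bf(vw-tup)}$ and $\mbox{\bf(vw-app)}$ follow by applying the two splitting facts and invoking the induction hypothesis on each premise: for $\mbox{\bf(vw-tup)}$ I combine the resulting $\ST^{(1)}\temd V_1$ and $\ST^{(2)}\temd V_2$ via the $\otimes$ entailment rule, and for $\mbox{\bf(vw-app)}$ I instantiate the universally quantified premise of $\ST^{(1)}\temd V_1\vimp V_2$ at the state $\ST^{(2)}\temd V_1$ obtained from the second induction hypothesis, using that $\ST^{(1)}\otimes\ST^{(2)}=\ST$.

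The case $\mbox{\bf(vw-lam)}$ is handled by unfolding the definition of $\vimp$ entailment. To prove $\ST\temd V_1\vimp V_2$ I take an arbitrary $\ST_0'\temd V_1$, regard it as the entailing state for the freshly bound proof variable $\pv:V_1$, and apply the induction hypothesis to the premise derivation over $\Pi,\pv:V_1$. This yields $\ST_0'\otimes\ST\temd V_2$, which is exactly what the $\vimp$ rule demands for every such $\ST_0'$, so $\ST\temd V_1\vimp V_2$ follows.

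I expect the genuine work to be concentrated in $\mbox{\bf(vw-let)}$, where the term destructs a proof of $V_1\otimes V_2$ and binds $\pv_1,\pv_2$. After splitting the states, the induction hypothesis on the first premise gives $\ST_0^1\otimes\hat\ST_1\temd V_1\otimes V_2$; since the $\otimes$ rule is the only one deriving a $V_1\otimes V_2$ judgment, an inversion step supplies a decomposition $\ST_0^1\otimes\hat\ST_1=\ST_A\otimes\ST_B$ with $\ST_A\temd V_1$ and $\ST_B\temd V_2$. Feeding $\ST_A$ and $\ST_B$ as the entailing states for $\pv_1$ and $\pv_2$ into the second premise and invoking the induction hypothesis yields $\ST_0^2\otimes\hat\ST_2\otimes\ST_A\otimes\ST_B\temd V$, which coincides with $\ST$. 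The main obstacle is thus the inversion of the $\otimes$ entailment combined with the careful accounting that the reassembled state equals $\ST$ exactly; once the two splitting facts are established, every remaining case is routine.
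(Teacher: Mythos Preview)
Your proposal is correct and follows the same approach as the paper, which simply states ``By structural induction on the derivation $\mathcal{D}$ of $\Pi\vdash_{\mu}\underline{t}:V$'' without spelling out any of the cases; you have filled in exactly the expected case analysis, including the inversion on $\otimes$-entailment needed for \textbf{(vw-let)}. One small remark: the paper's entailment rules omit an explicit clause $[]\models\mathbf{1}$, so your handling of \textbf{(vw-unit)} relies on the obvious intended reading, and the appeal to Proposition~\ref{proposition:value} in the \textbf{(vw-addr)} case is harmless but not strictly necessary once $\models\mbox{\it ST}:\mu$ is read as typing each stored value under the empty state type.
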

\begin{proof}
By structural induction on the derivation $\D$ of $\Pi\tpjg_{\mu} \pt:V$.
\end{proof}

We use $[\pv_1,\ldots,\pv_n\mapsto\pt_1,\ldots,\pt_n]$ for a substitution
that maps $\pv_i$ to $\pt_i$ for $1\leq i\leq n$.  Similarly, we use
$[\xf_1,\ldots,\xf_n\mapsto\dt_1,\ldots,\dt_n]$ for a substitution that maps
$\xf_i$ to $\dt_i$ for $1\leq i\leq n$.
\begin{definition}
We define redexes as follows:
\begin{enumerate}
\item
$\letin{\tuple{\pv,x}=\tuple{\pt,v}}{\dt}$ is a redex, and its
reduction is $\subst{v}{x}{\subst{\pt}{\pv}{\dt}}$.
\item
$(\lambda\pv.v)(\pt)$ is a redex, and its reduction is
$\subst{\pt}{\pv}{v}$.
\item
$\letin{\tuple{x_1,x_2}=\tuple{v_1,v_2}}{\dt}$ is a redex, and its
reduction is $\subst{v_1,v_2}{x_1,x_2}{\dt}$.
\item
$\app(\lam{x}{\dt},v)$ is a redex, and its reduction is
$\subst{v}{x}{\dt}$.
\item
$\fix{f}{\dt}$ is a redex, and its reduction is
$\subst{\fix{f}{\dt}}{f}{\dt}$.
\end{enumerate}
\end{definition}
We use $E$ for evaluation contexts, which are defined as follows:
\[\begin{array}{lrcl}
\mbox{evaluation context} & E & ::= &
[] \mid \dc(v_1, \ldots,v_{i-1},E,\dt_{i+1},\ldots,\dt_n) \mid \\
&&&\fread (\pt, E) \mid \fwrite (\pt, E, \dt) \mid \fwrite (\pt, v, E) \mid \\
&&&\pt\Vand E \mid \letin{\pv\Vand x=E}{\dt} \mid E(\pt) \mid \\
&&&\tuple{E,\dt} \mid \tuple{v,E} \mid \letin{\tuple{x_1,x_2}=E}{t} \mid \app(E,\dt)\mid \app(v,E) \\
\end{array}\]
Given $E$ and $\dt$, we write $E[\dt]$ for the dynamic term obtained from
replacing the hole [] in $E$ with $\dt$. Note that such a replacement can
never cause a free variable to be captured.  Given $\ST_1,\ST_2$ and
$\dt_1,\dt_2$, we write $(\ST_1,\dt_1)\steval (\ST_2,\dt_2)$ if
\begin{enumerate}
\item
$\dt_1=E[\dt]$ and $\dt_2=E[\dt']$ for some redex $\dt$ and its reduction, or
\item
$\dt_1=E[\fread(\pt,l)]$ for some $l\in\dom(\ST_1)$ and
$\dt_2=E[\tuple{\pt,\ST_1(l)}]$ and $\ST_2=\ST_1$, or
\item
$\dt_1=E[\fwrite(\pt,l,v)]$ for some $l\in\dom(\ST_1)$ and
$\dt_2=E[\tuple{\pt,\tuple{}}]$ and $\ST_2=\ST_1[l:=v]$.
\end{enumerate}
We write $\ST[l:=v]$ for a state that maps $l$ to $v$ and coincides with
$\ST$ elsewhere. Note that we implicitly assume $l\in\dom(\ST)$ when
writing $\ST[l:=v]$.

We now state some lemmas and theorems involved in establishing the
soundness of $\lamsv$. Please see~\cite{VsTsVTs} for details on their
proofs.
\begin{lemma}[Substitution]
We have the following:
\begin{enumerate}
\item
Assume that both $\Pi_1\tpjg_{\mu_1}\pt_1:V_1$ and
$\Pi_2,\pv:V_1\tpjg_{\mu_2}\pt_2:V_2$ are derivable.  Then
$\Pi_2\tpjg_{\mu_1\otimes\mu_2}\subst{\pt_1}{\pv}{\pt_2}:V_2$ is also
derivable.
\item
Assume that both $\Pi_1\tpjg_{\mu_1}\pt:V$ and
$\Pi_2,\pv:V;\Delta\tpjg_{\mu_2}\dt:\VT$ are derivable.
Then $\Pi_1,\Pi_2;\Delta\tpjg_{\mu_1\otimes\mu_2}\subst{\pt}{\pv}{\dt}:\VT$
is also derivable.
\item
Assume that both $\emptyset;(\emptyset;\emptyset)\tpjg_{\mu_1} v:\VT_1$ and
$\Pi;\Delta,x:\VT_1\tpjg_{\mu_2}\dt:\VT_2$ are derivable.
Then $\Pi;\Delta\tpjg_{\mu_1\otimes\mu_2}\subst{v}{x}{\dt}:\VT_2$
is also derivable.
\end{enumerate}
\end{lemma}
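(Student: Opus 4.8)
The plan is to prove all three statements by structural induction on the derivation $\D$ of the \emph{host} judgment---the one into which the substitution is performed---namely $\Pi_2,\pv:V_1\tpjg_{\mu_2}\pt_2:V_2$ in part (1), $\Pi_2,\pv:V;\Delta\tpjg_{\mu_2}\dt:\VT$ in part (2), and $\Pi;\Delta,x:\VT_1\tpjg_{\mu_2}\dt:\VT_2$ in part (3). Throughout I treat proof variable contexts and linear dynamic contexts as multisets, so that exchange is available for free, and I silently $\alpha$-rename bound proof and dynamic variables to keep them distinct from the free variables of the substituted term, which guarantees that no capture occurs when a binder rule such as \textbf{(vw-lam)}, \textbf{(ty-vlam)}, \textbf{(ty-lam)}, or \textbf{(ty-vlet)} is traversed.

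The base case in each part is the corresponding variable rule. For part (1), a derivation ending in \textbf{(vw-var)} forces $\Pi_2=\emptyset$ and $\mu_2=[]$, so $\pt_2=\pv$ and $\subst{\pt_1}{\pv}{\pt_2}=\pt_1$; since $\mu_1\otimes[]=\mu_1$, the required judgment is exactly the first hypothesis $\Pi_1\tpjg_{\mu_1}\pt_1:V_1$ (with $\Pi_1,\Pi_2=\Pi_1$). Rules with an empty proof context, namely \textbf{(vw-addr)} and \textbf{(vw-unit)}, cannot occur, as their context cannot contain $\pv$. Parts (2) and (3) have analogous base cases at \textbf{(ty-var)}.

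For the inductive step the only rules that require genuine thought are those that split a context, i.e. \textbf{(vw-tup)}, \textbf{(vw-let)}, \textbf{(vw-app)} in part (1), together with \textbf{(ty-cst)}, \textbf{(ty-if)}, \textbf{(ty-vtup)}, \textbf{(ty-tup)}, \textbf{(ty-let)}, \textbf{(ty-vapp)}, \textbf{(ty-app)}, \textbf{(ty-vlet)}, \textbf{(ty-read)}, and \textbf{(ty-write)} in parts (2) and (3). In parts (1) and (2) the variable $\pv$ is linear, so it occurs in exactly one of the subcontexts produced by the split; I apply the induction hypothesis to that premise (carrying $\pt_1$ along, with its context $\Pi_1$ and state type $\mu_1$), leave the sibling premise untouched, and then re-apply the same rule. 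Reassembling the conclusion uses associativity and commutativity of context concatenation and of $\otimes$ on state types to rearrange $\mu_1\otimes\mu_2$ into the form the rule delivers. The rules that demand an empty proof context in their conclusion---\textbf{(ty-vlam)}, \textbf{(ty-lam)}, \textbf{(ty-fix)}---are vacuous in parts (1) and (2) whenever $\pv$ is present.

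Part (3) carries the main subtlety, namely the split between the intuitionistic context $\pDelta$ and the linear context $\eDelta$ hidden inside $\Delta$. If $\VT_1$ is a genuine viewtype, then $x$ lives in $\eDelta$, is used exactly once, and the argument proceeds exactly as in the linear cases above. If instead $\VT_1$ is a type, then $x$ lives in the shared intuitionistic context $\pDelta$ and may be used zero or many times across the branches of a split; here I invoke Proposition~\ref{proposition:value} on the hypothesis $\emptyset;(\emptyset;\emptyset)\tpjg_{\mu_1}v:\VT_1$ to conclude $\mu_1=[]$, so that duplicating the closed, resource-free value $v$ into every occurrence of $x$ consumes no state and the identity $\mu_1\otimes\mu_2=\mu_2$ is preserved under the splits. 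This reliance on Proposition~\ref{proposition:value} to legitimize unrestricted copying of intuitionistic values is the one genuinely non-bookkeeping step; everything else is the routine propagation of substitution through the typing rules together with the algebra of context and state-type combination, which I expect to be the most error-prone but conceptually straightforward part of the proof.
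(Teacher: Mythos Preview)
Your proposal is correct and follows exactly the approach indicated in the paper: a standard structural induction on the host derivation, with Proposition~\ref{proposition:value} invoked in part~(3) to handle the intuitionistic case where $\VT_1$ is a type and the value $v$ may be duplicated. The paper's proof sketch says precisely this and no more, so you have simply filled in the routine details.
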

\begin{proof}
By standard structural induction.  In particular, we encounter a need for
Proposition~\ref{proposition:value} when proving (3).
\end{proof}
As usual, the soundness of the type system of $\lamsv$ is built on top of
the following two theorems:
\begin{theorem}[Subject Reduction]\label{theorem:subject_reduction}
Assume $\emptyset;(\emptyset;\emptyset)\tpjg_{\mu_1}\dt_1:\VT$ is derivable
and $\temd\ST_1:\mu_1$ holds.  If $(\ST_1,\dt_1)\steval (\ST_2,\dt_2)$, then
$\emptyset;(\emptyset;\emptyset)\tpjg_{\mu_2}\dt_2:\VT$ is derivable for
some store type $\mu_2$ such that $\temd\ST_2:\mu_2$ holds.
\end{theorem}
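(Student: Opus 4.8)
The plan is to reduce the theorem to a standard replacement argument for evaluation contexts and then to analyze the three clauses defining $\steval$. First I would establish a \emph{decomposition and replacement lemma}: if $\emptyset;(\emptyset;\emptyset)\tpjg_{\mu}E[\dt]:\VT$ is derivable then, because every frame of an evaluation context corresponds to exactly one syntax-directed rule of Figure~\ref{figure:viewtype_rules} in which the state type is split multiplicatively by $\otimes$ and the hole is treated opaquely, the derivation contains a subderivation $\emptyset;(\emptyset;\emptyset)\tpjg_{\mu_{0}}\dt:\VT_{0}$ with $\mu=\mu_{E}\otimes\mu_{0}$; moreover, for any $\dt'$ with $\emptyset;(\emptyset;\emptyset)\tpjg_{\mu_{0}'}\dt':\VT_{0}$ one may rebuild the derivation to obtain $\emptyset;(\emptyset;\emptyset)\tpjg_{\mu_{E}\otimes\mu_{0}'}E[\dt']:\VT$. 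Both halves go through by a routine induction on $E$, using that no frame binds the variable occupying the hole and that there is no subsumption rule, so inversion is unambiguous. It then suffices to show, for each clause of $\steval$, that the contracted subterm can be retyped at $\VT_{0}$ under a state type $\mu_{0}'$ for which $\temd\ST_{2}:\mu_{E}\otimes\mu_{0}'$ holds; taking $\mu_{2}=\mu_{E}\otimes\mu_{0}'$ finishes the proof.

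For clause~1 (a pure redex) I would dispatch each of the five cases of the Definition by the Substitution Lemma, checking that the local state type is unchanged ($\mu_{0}'=\mu_{0}$) and that the $\otimes$-decompositions supplied by the Substitution Lemma reassemble exactly to $\mu_{0}$. For instance, $\letin{\tuple{\pv,x}=\tuple{\pt,v}}{\dt}$ is typed by inverting $\mbox{\bf(ty-vlet)}$ and $\mbox{\bf(ty-vtup)}$, and its reduct is typed by applying parts~(2) and~(3) of the Substitution Lemma in turn; the $(\lambda\pv.v)(\pt)$, $\app(\lam{x}{\dt},v)$ and tuple-let cases are analogous (the $\Vimpboth$ and $\timpboth$ flavours are handled uniformly, since once the function body is in hand the arrow variant is irrelevant). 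The $\fix{f}{\dt}$ case substitutes a non-value for the intuitionistic variable $f$, so it needs the evident intuitionistic counterpart of the Substitution Lemma; this is unproblematic because $f$ is assigned a \emph{type} and, by Proposition~\ref{proposition:value}, the unfolding $\fix{f}{\dt}$ carries the empty state type $[]$, so no resources are duplicated. Since clause~1 leaves the store untouched, $\ST_{2}=\ST_{1}$ and $\mu_{2}=\mu_{1}$.

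The two stateful clauses are where Lemma~\ref{lemma:viewing} and the entailment judgment $\temd$ enter. In both, inversion on $\mbox{\bf(ty-cst)}$ forces the pointer argument $l$ to have type $\tptr(l)$ under the empty state type, so the redex's local state type $\mu_{0}$ is exactly the state type $\mu_{\pt}$ of the proof $\pt$. For clause~2 (read), $\fread(\pt,l)$ has type $(T@l)\Vand T$ with $\pt:T@l$; splitting $\ST_{1}$ along $\mu_{E}\otimes\mu_{\pt}$ and applying Lemma~\ref{lemma:viewing} to the closed proof $\pt$ yields $\ST_{\pt}\temd T@l$, and inversion on the entailment rule for $T@l$ forces $\ST_{\pt}=[l\mapsto\ST_{1}(l)]$ with $\emptyset;(\emptyset;\emptyset)\tpjg_{[]}\ST_{1}(l):T$. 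Hence $\mbox{\bf(ty-vtup)}$ retypes the reduct $\tuple{\pt,\ST_{1}(l)}$ at $(T@l)\Vand T$ with the unchanged local state type $\mu_{\pt}$, and since $\ST_{2}=\ST_{1}$ we take $\mu_{2}=\mu_{1}$.

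Clause~3 (write) is the main obstacle. Here $\fwrite(\pt,l,v)$ has type $(T'@l)\Vand\tunit$ with $\pt:T@l$ and $v:T'$, and the reduct is $\tuple{\pt,\tuple{}}$ with $\ST_{2}=\ST_{1}[l:=v]$: the \emph{same} proof $\pt$ must now be accepted at the \emph{changed} primitive view $T'@l$, even though the operational rule reuses $\pt$ verbatim while the content type at $l$ has been replaced by $T'$. I would resolve this with a \emph{resource-retyping lemma}: if $\emptyset\tpjg_{\mu}\pt:V$ and $\mu=\mu''\otimes[l\mapsto T]$, then $\emptyset\tpjg_{\mu''\otimes[l\mapsto T']}\pt:V[T@l\!\mapsto\!T'@l]$, proved by induction on the view derivation of Figure~\ref{figure:view_rules}, uniformly substituting $T'$ for $T$ at the resource associated with $l$. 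The linearity of views is exactly what makes this sound: the single constant licensed by $[l\mapsto T]$ is the unique syntactic source of the view $T@l$ (at the $\mbox{\bf(vw-addr)}$ leaf), so every occurrence threaded through $\mbox{\bf(vw-tup)}$, $\mbox{\bf(vw-let)}$, $\mbox{\bf(vw-lam)}$ and $\mbox{\bf(vw-app)}$ traces back to that one leaf, while no \emph{independent} occurrence of $T@l''$ can appear since it would demand a resource absent from $\mu$; rigidity of types (the absence of subsumption) guarantees the substitution disturbs no side condition. Applied with $V=T@l$ (where the viewing argument pins $\mu_{\pt}=[l\mapsto T]$), the lemma gives $\emptyset\tpjg_{[l\mapsto T']}\pt:T'@l$, so $\mbox{\bf(ty-vtup)}$ types $\tuple{\pt,\tuple{}}$ at $(T'@l)\Vand\tunit$ with local state type $[l\mapsto T']$. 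Taking $\mu_{2}=\mu_{E}\otimes[l\mapsto T']$, the updated store satisfies $\temd\ST_{2}:\mu_{2}$ because $\ST_{2}$ agrees with $\ST_{1}$ off $l$ and stores $v:T'$ at $l$, which completes the argument.
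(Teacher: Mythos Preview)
Your plan is sound and aligns with the paper's stated approach---structural induction on the typing derivation of $\dt_1$, with all details deferred to a companion report---and your evaluation-context decomposition is just the standard way to organise that induction. Two minor points: your appeal to Proposition~\ref{proposition:value} in the $\fix{f}{\dt}$ case is misplaced since the fixpoint is not a value (the needed fact that it carries state type $[]$ is immediate from rule $\mbox{\bf(ty-fix)}$ itself), and the viewing argument you invoke for clause~3 pins only $\dom(\mu_{\pt})=\{l\}$, not $\mu_{\pt}=[l\mapsto T]$; neither affects the overall argument, since your resource-retyping lemma---once stated precisely as ``replace the unique $\mbox{\bf(vw-addr)}$ leaf for $l$ and propagate''---handles both.
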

\begin{proof}
By structural induction on the derivation of
$\emptyset;(\emptyset;\emptyset)\tpjg_{\mu_1}\dt_1:\VT$.
\end{proof}

\begin{theorem}[Progress]\label{theorem:progress}
Assume that $\emptyset;(\emptyset;\emptyset)\tpjg_{\mu}\dt:\VT$ is
derivable and $\temd\ST:\mu$ holds.  Then either $\dt$ is a value or
$(\ST,\dt)\steval(\ST',\dt')$ for some $\ST'$ and $\dt'$ or $\dt$ is of the
form $E[\dcf(v_1,\ldots,v_n)]$ such that $\dcf(v_1,\ldots,v_n)$ is
undefined.
\end{theorem}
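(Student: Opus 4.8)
The plan is to proceed by structural induction on the derivation $\D$ of $\emptyset;(\emptyset;\emptyset)\tpjg_{\mu}\dt:\VT$, with a case analysis on the last rule applied. Since both dynamic variable contexts are empty, every typable $\dt$ is closed, so (ty-var) cannot be the last rule. For the introduction forms that carry no dynamic subterm in evaluation position — (ty-unit), (ty-vlam0), (ty-vlam), (ty-lam0), and (ty-lam) — the subject $\dt$ is already a value (recall that $\lambda\pv.v$ and $\lam{x}{\dt}$ are values), while (ty-fix) produces a redex that steps by clause 5 of the redex definition. The remaining rules all place a dynamic subterm in evaluation position, and to dispatch them I will first establish a routine Canonical Forms lemma by inversion on the value typing rules: a closed value of viewtype $\tptr(L)$ is the address constant $L$, one of $V\Vand\VT$ has the form $\pt\Vand v$, one of $\VT_1 * \VT_2$ has the form $\tuple{v_1,v_2}$, one of $\VT_1\timpboth\VT_2$ has the form $\lam{x}{\dt}$, one of $V\Vimpboth\VT$ has the form $\lambda\pv.v$, and one of $\tBool$ is a boolean constant.

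For each of these rules I will apply the induction hypothesis to the subterm in evaluation position, using the fact that evaluation contexts compose (filling the hole of one evaluation context with another again yields an evaluation context). Thus if that subterm steps, the whole term steps through the surrounding context, and if it already has the stuck form $E[\dcf(v_1,\ldots,v_n)]$, then so does $\dt$. The one remaining possibility is that the subterm is a value, and here Canonical Forms exposes the expected shape: (ty-vapp) yields $(\lambda\pv.v)(\pt)$, (ty-app) yields $\app(\lam{x}{\dt},v)$, and (ty-let), (ty-vlet) yield the tuple-let and viewtype-pair-let redexes, each contracted by its matching clause. For the pair and constructor introductions (ty-vtup), (ty-tup), once the relevant subterms are values the whole term is itself a value; and for a constant application $\dc(v_1,\ldots,v_n)$ with all arguments values, if $\dc$ is a constructor the term is a value, whereas if it is a function $\dcf$ the term either $\delta$-reduces (when defined) or is the stuck form $E[\dcf(v_1,\ldots,v_n)]$ with $E=[]$; the conditional (ty-if) is resolved by selecting a branch once Canonical Forms at $\tBool$ fixes the scrutinee.

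I expect the read and write cases to be the heart of the argument, as they are exactly where memory safety must be delivered. Consider (ty-read): $\dt=\fread(\pt,\dt')$ with $\emptyset\tpjg_{\mu_1}\pt:T@L$ and $\emptyset;(\emptyset;\emptyset)\tpjg_{\mu_2}\dt':\tptr(L)$ and $\mu=\mu_1\otimes\mu_2$. Applying the induction hypothesis to $\dt'$ (under the context $\fread(\pt,E)$), I may assume $\dt'$ is a value, whence Canonical Forms gives $\dt'=l$ with $L=l$. The one nontrivial point is to show $l\in\dom(\ST)$, so that clause 2 of $\steval$ applies and produces $\tuple{\pt,\ST(l)}$. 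This is precisely where the linear view discipline pays off: from $\temd\ST:\mu$ I split off the sub-state $\ST_0$ with $\temd\ST_0:\mu_1$, apply Lemma~\ref{lemma:viewing} to the closed proof $\pt$ to obtain $\ST_0\temd T@l$, and then invert the entailment rules — the only rule concluding $\ST_0\temd T@l$ forces $\ST_0=[l\mapsto v]$ — so that $l\in\dom(\ST_0)\subseteq\dom(\ST)$. The write case (ty-write) is entirely analogous, using clause 3 of $\steval$ and the updated state $\ST[l:=v]$.

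The main obstacle is thus establishing this link between a proof of the view $T@L$ and the actual domain of the state; everything else reduces to the uniform context-and-canonical-forms bookkeeping of the second paragraph. Once Lemma~\ref{lemma:viewing} and the inversion on the entailment rules are in hand, the read and write steps are guaranteed never to address an absent location, which is exactly the progress guarantee that underlies memory safety.
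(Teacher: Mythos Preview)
Your proposal is correct and follows the same approach as the paper: structural induction on the typing derivation, with Lemma~\ref{lemma:viewing} invoked precisely in the \textbf{(ty-read)} and \textbf{(ty-write)} cases to guarantee that the address lies in $\dom(\ST)$. The paper's own proof is a two-line sketch that names exactly this method and this lemma; you have simply unpacked the case analysis in the expected way.
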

\begin{proof}
By structural induction on the derivation $\D$ of
$\emptyset;(\emptyset;\emptyset)\tpjg_{\mu}\dt:\VT$.
Lemma~\ref{lemma:viewing} is needed when we handling the rules
$\mbox{\bf(ty-read)}$ and $\mbox{\bf(ty-write)}$.
\end{proof}
By Theorem~\ref{theorem:subject_reduction} and
Theorem~\ref{theorem:progress}, we can readily infer that if
$\emptyset;(\emptyset;\emptyset)\tpjg_{\mu}\dt:\VT$ is derivable and
$\temd\ST:\mu$ holds, then either the evaluation of $(\ST,\dt)$ reaches
$(\ST',v)$ for some state $\ST'$ and value $v$ or it continues forever.

Clearly, we can define a function $\erase{\cdot}$ that erases all the proof
terms in a given dynamic term.  For instance, some key cases in the
definition of the erasure function are given as follows:
\[\begin{array}{rcl}
\erase{\letin{\pv\Vand x=t_1}{t_2}} & = & \letin{x=\erase{t_1}}{t_2} \\
\erase{\pt\Vand\dt} & = & \erase{\dt} \\
\erase{\lambda\pv.v} & = & \erase{v} \\
\erase{\dt(\pt)} & = & \erase{\dt} \\
\erase{\fread(\pt,\dt)} & = & \fread(\erase{\dt}) \\
\erase{\fwrite(\pt,\dt_1,\dt_2)} & = & \fwrite(\erase{\dt_1},\erase{\dt_2}) \\
\end{array}\]
It is then straightforward to show that a dynamic term evaluates to a value
if and only if the erasure of the dynamic term evaluates to the erasure of
the value. Thus, there is no need to keep proof terms at run-time: They are
only needed for the purpose of type-checking.
Please see~\cite{CPwTP} for more details on the issue of proof erasure.

\begin{figure}
\[\begin{array}{lrcl}
\mbox{sorts} & \sigma & ::= & \saddr \mid \sbool \mid \sint \mid
\sview \mid \stype \mid \sviewtype \\
\mbox{static contexts} & \Sigma & ::= & \emptysctx \mid \Sigma, a:\sigma \\
\mbox{static addr.} & L & ::= & a \mid l \mid L+I \\
\mbox{static int.} & I & ::= & a \mid i \mid c_I (s_1,\ldots, s_n) \\
\mbox{static prop.} & B & ::= & a \mid b \mid c_B (s_1, \ldots, s_n) \mid
\neg{B} \mid B_1\land B_2 \mid B_1\lor B_2 \mid B_1\pimp B_2\\
\mbox{views} & V & ::= & \ldots \mid B\Bimp V \mid \forall a:\sigma.V \mid
B\Band V \mid \exists a:\sigma.V \\
\mbox{types} & T & ::= & \ldots \mid a \mid \tbool(B) \mid \tint(I) \mid B\Bimp T \mid \forall a:\sigma.T \mid
B\Band T \mid \exists a:\sigma.T \\
\mbox{viewtypes} & \VT & ::= & \ldots \mid B\Bimp\VT \mid \forall
a:\sigma.\VT \mid B\Band\VT \mid \exists a:\sigma.\VT \\
\end{array}\]
\caption{The syntax for the statics of $\lamsvplus$}
\label{figure:lamsvplus_statics}
\end{figure}
\section{Extension}\label{section:extension}
While it supports both views and viewtypes, $\lamsv$ is essentially based
on the simply typed language calculus. This makes it difficult to truly
reap the benefits of views and viewtypes.  In this section, we outline an
extension from $\lamsv$ to $\lamsvplus$ to include universally quantified
types as well as existentially quantified types, greatly facilitating the
use of views and viewtypes in programming. For brevity, most of technical
details are suppressed in this presentation, which is primarily for the
reader to relate the concrete syntax in the examples we present to some
form of formal syntax.

Like an applied type system~\cite{ATS-types03}, $\lamsvplus$ consists of
a static component (statics) and a dynamic component (dynamics).  The
syntax for the statics of $\lamsvplus$ is given in
Figure~\ref{figure:lamsvplus_statics}.  The statics itself is a simply
typed language and a type in it is called a {\em sort}. We assume the
existence of the following basic sorts in $\lamsvplus$:
$\saddr,\sbool,\sint$, $\stype$, $\sview$ and $\sviewtype$; $\saddr$ is the
sort for addresses, and $\sbool$ is the sort for boolean constants, and
$\sint$ is the sort for integers, and $\stype$ is the sort for types, and
$\sview$ is the sort for views, and $\sviewtype$ is the sort for viewtypes.
We use $a$ for static variables, $l$ for address constants
$\lc_0,\lc_1,\ldots$, $b$ for boolean values $\ttrue$ and $\ffalse$, and
$i$ for integers $0,-1,1,\ldots$. We may also use $\sccnull$ for the null
address $\lc_0$.  A term $s$ in the statics is called a static term, and we
use $\Sigma\tpjg s:\sigma$ to mean that $s$ can be assigned the sort
$\sigma$ under $\Sigma$. The rules for assigning sorts to static terms are
all omitted as they are completely standard.

We may also use $L,B,I,T,V,\VT$ for static terms of sorts $\saddr$,
$\sbool$, $\sint$, $\stype$, $\sview$, and $\sviewtype$, respectively. We
assume some primitive functions $\cI$ when forming static terms of sort
$\sint$; for instance, we can form terms such as $I_1 + I_2$, $I_1 - I_2$,
$I_1 * I_2$ and $I_1/I_2$. Also we assume certain primitive functions $\cB$
when forming static terms of sort $\sbool$; for instance, we can form
propositions such as $I_1\leq I_2$ and $I_1\geq I_2$, and for each sort
$\sigma$ we can form a proposition $s_1=_{\sigma} s_2$ if $s_1$ and $s_2$
are static terms of sort $\sigma$; we may omit the subscript $\sigma$ in
$=_{\sigma}$ if it can be readily inferred from the context.  In addition,
given $L$ and $I$, we can form an address $L+I$, which equals $\lc_{n+i}$
if $L=\lc_n$ and $I=i$ and $n+i\geq 0$.

We use $\vB$ for a sequence of propositions and $\Sigma;\vB\temd B$ for a
constraint that means for any $\Theta:\Sigma$, if each proposition in
$\vB[\Theta]$ holds then so does $B[\Theta]$.

In addition, we introduce two type constructors $\tbool$ and $\tint$; given
a proposition $B$, $\tbool(B)$ is the singleton type in which the only
value is the truth value of $B$; similarly, given an integer $I$,
$\tint(I)$ is the singleton type in which the only value is the integer
$I$. Obviously, the previous types $\tBool$ and $\tInt$ can now be defined
as $\exists a:\sbool.\tbool(a)$ and $\exists a:\sint.\tint(a)$, respectively.

\begin{figure}[thp]
\[\begin{array}{lrcl}
\mbox{proof terms} & \pt & ::= &
\ldots \mid\;\iguard{\pt} \mid\;\eguard{\pt} \mid \iforall(\pt) \mid
\eforall(\pt) \mid \\
&&&\Band(\pt) \mid \letin{\Band(\pv)=\pt_1}{\pt_2} \mid
\exists(\pt) \mid \letin{\exists(\pv)=\pt_1}{\pt_2} \\
\mbox{dynamic terms} & \dt & ::= &
\ldots \mid \letin{\Band(\pv)=\pt}{\dt} \mid \letin{\exists(\pv)=\pt}{\dt} \mid \\
&&&\iguard{v} \mid\;\eguard{\dt} \mid \iforall(v) \mid
\eforall(\dt) \mid \\
&&&\Band(\dt) \mid \letin{\Band(x)=\dt_1}{\dt_2} \mid
\exists(\dt) \mid \letin{\exists(x)=\dt_1}{\dt_2} \\
\mbox{values} & v & ::= &
\ldots \mid\;\iguard{v} \mid \iforall(v) \mid \Band(v) \mid \exists(v) \\
\end{array}\]
\caption{The syntax for the dynamics of $\lamsvplus$}
\label{figure:lamsvplus_dynamics}
\end{figure}
Some (additional) syntax for the dynamics of $\lamsvplus$ is given in
Figure~\ref{figure:lamsvplus_dynamics}.  The markers $\iguard{\cdot}$,
$\eguard{\cdot}$, $\iforall(\cdot)$, $\eforall(\cdot)$, $\Band(\cdot)$ and
$\exists(\cdot)$ are primarily introduced to prove the soundness of the
type system of $\lamsvplus$, and please see~\cite{ATS-types03} for
explanation.

We can now also introduce the (built-in) memory access functions
$\dcfgetPtr$ and $\dcfsetPtr$\footnote{The type assigned to $\dcfsetPtr$ is
slightly different from the one in Section~\ref{section:introduction}.}  as
well as the (built-in) memory allocation/deallocation functions $\falloc$
and $\ffree$ and assign them the following constant types:
\[\begin{array}{rcl}
\dcfgetPtr & : &
\forall\lambda.\forall\tau.\tau@\lambda\Vand\tptr(\lambda) \Timp \tau@\lambda\Vand\tau \\
\dcfsetPtr & : &
\forall\lambda.\forall\tau.\ttop@\lambda\Vand\tptr(\lambda)*\tau\Timp\tau@\lambda\Vand\tunit\\
\falloc &~~:~~& \forall\iota.\iota\geq 0\Bimp
(\tint(\iota)\Timp\exists\lambda.\lambda\not=\sccnull\Band(\arrayView(\tunit,\iota,\lambda)\Vand\tptr(\lambda))\\
\ffree &~~:~~& \forall\tau.\forall\iota.\iota\geq 0\Bimp(\arrayView(\tau,\iota,\lambda)\Vand(\tptr(\lambda)*\tint(\iota))\Timp\tunit)\\
\end{array}\]
We use $\ttop$ for the top type such that every type is considered a
subtype of $\ttop$.  When applied to a natural number $n$, $\falloc$
returns a pointer (that is not null) pointing to a newly allocated array of
$n$ units; when applied to a pointer pointing to an array of size $n$,
$\ffree$ frees the array. Note that how these two functions are implemented
is inessential here as long as the implementations meets the constant types
assigned to them.

\begin{figure}[t]
\[\begin{array}{c}
\infer[\mbox{\bf(vw-$\Bimp$+)}]
      {\Sigma;\vB;\Pi\tpjg_{\mu}\iguard{\pt}:B\Bimp V}
      {\Sigma;\vB,B;\Pi\tpjg_{\mu} \pt:V}
\kern18pt
\infer[\mbox{\bf(vw-$\Bimp$-)}]
      {\Sigma;\vB;\Pi\tpjg_{\mu} \eguard{\pt}:V}
      {\Sigma;\vB;\Pi\tpjg_{\mu} \pt:B\Bimp V & \Sigma;\vB\temd B} \\[4pt]

\infer[\mbox{\bf(vw-$\forall$+)}]
      {\Sigma;\vB;\Pi\tpjg_{\mu} \iforall(\pt):\forall a:\sigma.V}
      {\Sigma,a:\sigma;\vB;\Pi\tpjg_{\mu} \pt:V}
\kern18pt
\infer[\mbox{\bf(vw-$\forall$-)}]
      {\Sigma;\vB;\Pi\tpjg_{\mu} \eforall(\pt):\subst{s}{a}{V}}
      {\Sigma;\vB;\Pi\tpjg_{\mu} \pt:\forall a:\sigma.V & \Sigma\tpjg s:\sigma} \\[4pt]

\infer[\mbox{\bf(vw-$\Band$+)}]
      {\Sigma;\vB;\Pi\tpjg_{\mu} \Band(\pt):B\Band V}
      {\Sigma;\vB\temd B & \Sigma;\vB;\Pi\tpjg_{\mu} \pt:V} \\[4pt]

\infer[\mbox{\bf(vw-$\Band$-)}]
      {\Sigma;\vB;\Pi_1,\Pi_2\tpjg_{\mu_1\otimes\mu_2}\letin{\Band(\pv)=\pt_1}{\pt_2}:V_2}
      {\Sigma;\vB;\Pi_1\tpjg_{\mu_1} \pt_1:B\Band V_1 &
       \Sigma;\vB,B;\Pi_2,\pv:V_1\tpjg_{\mu_2} \pt_2:V_2} \\[4pt]

\infer[\mbox{\bf(vw-$\exists$+)}]
      {\Sigma;\vB;\Pi\tpjg_{\mu} \exists(\pt):\exists a:\sigma.V}
      {\Sigma\tpjg s:\sigma & \Sigma;\vB;\Pi\tpjg_{\mu} \pt:\subst{s}{a}{V}} \\[4pt]

\infer[\mbox{\bf(vw-$\exists$-)}]
      {\Sigma;\vB;\Pi_1,\Pi_2\tpjg_{\mu_1\otimes\mu_2}\letin{\exists(\pv)=\pt_1}{\pt_2}:V_2}
      {\Sigma;\vB;\Pi_1\tpjg_{\mu_1} \pt_1:\exists a:\sigma.V_1 &
       \Sigma,a:\sigma;\vB;\Pi_2,\pv:V_1\tpjg_{\mu_2} \pt_2:V_2} \\[4pt]
\end{array}\]
\caption{Some additional rules for assigning views to proof terms}
\label{figure:additional_view_rules}
\end{figure}
\begin{figure}[t]
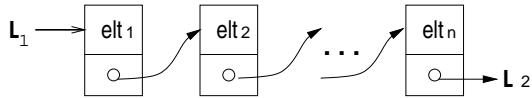

\[\begin{array}{c}
\infer[\mbox{\bf(ty-if)}]
      {\Sigma;\vB;\Pi_1,\Pi_2;\Delta_1\uplus\Delta_2\tpjg_{\mu_1\otimes\mu_2}\cond(\dt_1,\dt_2,\dt_3):\VT}
      {$$\begin{array}{c}
       \Sigma;\vB;\Pi_1;\Delta_1\tpjg_{\mu_1}\dt_1:\tbool (B) \\
       \Sigma;\vB,B;\Pi_2;\Delta_2\tpjg_{\mu_2}\dt_2:\VT \kern12pt
       \Sigma;\vB,\neg{B};\Pi_2;\Delta_2\tpjg_{\mu_2}\dt_3:\VT
       \end{array}$$ } \\[4pt]

\infer[\mbox{\bf(ty-$\Bimp$+)}]
      {\Sigma;\vB;\Pi;\Delta\tpjg_{\mu} \iguard{\dt}:B\Bimp\VT}
      {\Sigma;\vB,B;\Pi;\Delta\tpjg_{\mu} \dt:\VT}
\kern18pt
\infer[\mbox{\bf(ty-$\Bimp$-)}]
      {\Sigma;\vB;\Pi;\Delta\tpjg_{\mu} \eguard{\dt}:\VT}
      {\Sigma;\vB;\Pi;\Delta\tpjg_{\mu} \dt:B\Bimp\VT &
       \Sigma;\vB\temd B} \\[4pt]

\infer[\mbox{\bf(ty-$\forall$+)}]
      {\Sigma;\vB;\Pi;\Delta\tpjg_{\mu} \iforall(\dt):\forall a:\sigma.\VT}
      {\Sigma,a:\sigma;\vB;\Pi;\Delta\tpjg_{\mu} \dt:\VT}
\kern18pt
\infer[\mbox{\bf(ty-$\forall$-)}]
      {\Sigma;\vB;\Pi;\Delta\tpjg_{\mu} \eforall(\dt):\subst{s}{a}{\VT}}
      {\Sigma;\vB;\Pi;\Delta\tpjg_{\mu} \dt:\forall a:\sigma.\VT &
       \Sigma\tpjg s:\sigma} \\[4pt]

\infer[\mbox{\bf(ty-$\Band$+)}]
      {\Sigma;\vB;\Pi;\Delta\tpjg_{\mu} \Band(\dt):B\Band \VT}
      {\Sigma;\vB\temd B & \Sigma;\vB;\Pi;\Delta\tpjg_{\mu} \dt:\VT} \\[4pt]

\infer[\mbox{\bf(ty-$\Band$-)}]
      {\Sigma;\vB;\Pi_1,\Pi_2;\Delta_1\uplus\Delta_2\tpjg_{\mu_1\otimes\mu_2}\letin{\Band(x)=\dt_1}{\dt_2}:\VT_2}
      {\Sigma;\vB;\Pi_1;\Delta_1\tpjg_{\mu_1} \dt_1:B\Band \VT_1 &
       \Sigma;\vB,B;\Pi_2;\Delta_2,x:\VT_1\tpjg_{\mu_2} \dt_2:\VT_2} \\[4pt]

\infer[\mbox{\bf(ty-$\exists$+)}]
      {\Sigma;\vB;\Pi;\Delta\tpjg_{\mu} \exists(\dt):\exists a:\sigma.\VT}
      {\Sigma\tpjg_{\mu} s:\sigma & \Sigma;\vB;\Pi;\Delta\tpjg_{\mu} \dt:\subst{s}{a}{\VT}} \\[4pt]

\infer[\mbox{\bf(ty-$\exists$-)}]
      {\Sigma;\vB;\Pi_1,\Pi_2;\Delta_1\uplus\Delta_2\tpjg_{\mu_1\otimes\mu_2}\letin{\exists(x)=\dt_1}{\dt_2}:\VT_2}
      {\Sigma;\vB;\Pi_1;\Delta_1\tpjg_{\mu_1} \dt_1:\exists a:\sigma.\VT_1 &
       \Sigma,a:\sigma;\vB;\Pi_2;\Delta_2,x:\VT_1\tpjg_{\mu_2} \dt_2:\VT_2} \\[4pt]
\end{array}\]
\caption{Some additional rules for assigning viewtypes to dynamic terms}
\label{figure:additional_viewtype_rules}
\end{figure}
A judgment for assigning a view to a proof is now of the form
$\Sigma;\vB;\Pi\tpjg_{\mu}\pt:V$, and the rules in
Figure~\ref{figure:view_rules} need to be modified properly.  Intuitively,
such a judgment means that $\Pi[\Theta]\tpjg_{\mu}\pt:V[\Theta]$ holds for
any substitution $\Theta:\Sigma$ such that each $B$ in $\vB[\Theta]$
holds. Some additional rules for assigning views to proof terms are given
in Figure~\ref{figure:additional_view_rules}.  Similarly, a judgment for
assigning a viewtype to a dynamic term is now of the form
$\Sigma;\vB;\Pi;\Delta\tpjg\dt:\VT$, and the rules in
Figure~\ref{figure:viewtype_rules} need to be modified properly.  Some
additional rules for assigning viewtypes to dynamic terms are given in
Figure~\ref{figure:additional_viewtype_rules}.

Given the development detailed in~\cite{ATS-types03}, it is a standard
routine to establish the soundness of the type system of $\lamsvplus$. The
challenge here is really not in the proof of the soundness; it is instead
in the formulation of the rules presented in
Figure~\ref{figure:view_rules}, Figure~\ref{figure:additional_view_rules},
Figure~\ref{figure:viewtype_rules} and
Figure~\ref{figure:additional_viewtype_rules} for assigning views and
viewtypes to proof terms and dynamic terms, respectively.  We are now ready
to present some running examples taken from the current implementation of
ATS.

\begin{figure}[t]
\begin{center}
\fontsize{11}{12}\selectfont
\begin{minipage}{15cm}
\begin{verbatim}
fun arrayMap {a1: type, a2: type, n: int, l: addr | n >= 0}
   (pf: arrayView (a1, n, l) | A: ptr l, n: int n, f: a1 -> a2)
  : '(arrayView (a2, n, l) | unit) =
  if n igt 0 then // [igt]: the greater-than function on integers
    let
       prval ArraySome (pf1, pf2) = pf
       val '(pf1 | v) = getPtr (pf1 | A)
       val '(pf1 | _) = setPtr (pf1 | A, f v)
       // [ipred]: the predessor function on integers
       val '(pf2 | _) = arrayMap (pf2 | A + 1, ipred n, f)
    in
       '(ArraySome (pf1, pf2) | '())
    end
  else let prval ArrayNone () = pf in '(ArrayNone () | '()) end
\end{verbatim}
\end{minipage}
\end{center}
\caption{An implementation of in-place array map function}
\label{figure:arrayMap_function}
\end{figure}
\begin{figure}[t]
\fontsize{9}{10}\selectfont
\begin{center}
\begin{minipage}{15cm}
\begin{verbatim}
dataview slsegView (type, int, addr, addr) =
  | {a:type, l:addr} SlsegNone (a, 0, l, l)
  | {a:type, n:int, first:addr, next:addr, last:addr | n >= 0, first <> null}
    // 'first <> null' is added so that nullity test can
    // be used to check whether a list segment is empty. 
    SlsegSome (a, n+1, first, last) of
      ((a, ptr next) @ first, slsegView (a, n, next, last))

viewdef sllistView (a:type, n:int, l:addr) = slsegView (a, n, l, null)

fun reverse {a:type, n:int, l:addr | n >= 0} // in-place singly-linked list reversal
   (pf: sllistView (a, n, l) | p: ptr l)
  : [l: addr] '(sllistView (a, n, l) | ptr l) =
  let
     fun rev {n1:int,n2:int,l1:addr,l2:addr | n1 >= 0, n2 >= 0}
        (pf1: sllistView (a,n1,l1), pf2: sllistView (a,n2,l2) |
         p1: ptr l1, p2: ptr l2)
       : [l:addr] '(sllistView (a, n1+n2, l) | ptr l) =
       if isNull p2 then let prval SlsegNone () = pf2 in '(pf1 | p1) end
       else let
          prval SlsegSome (pf21, pf22) = pf2
          prval '(pf210, pf211) = pf21
          val '(pf211 | next) = getPtr (pf211 | p2 + 1)
          val '(pf211 | _) = setPtr (pf211 | p2 + 1, p1)
          prval pf1 = SlsegSome ('(pf210, pf211), pf1)
       in rev (pf1, pf22 | p2, next) end
  in
     rev (SlsegNone (), pf | null, p)
  end
\end{verbatim}
\end{minipage}
\end{center}
\caption{An implementation of in-place singly-linked list reversal}
\label{figure:sllistReversal}
\end{figure}
A clearly noticeable weakness in many typed programming languages lies in
the treatment of the allocation and initialization of arrays (and many
other data structures).  For instance, the allocation and initialization of
an array in SML is atomic and cannot be done separately. Therefore, copying
an array requires a new array be allocated and then initialized before
copying can actually proceed. Though the initialization of the newly
allocated array is completely useless, it unfortunately cannot be avoided.
In $\lamsvplus$ (extended with recursive stateful views), a function of the
following type can be readily implemented that replaces elements of type
$T_1$ in an array with elements of type $T_2$ when a function of type
$T_1\timp T_2$ is given:
\begin{center}
\[\begin{array}{l}
\forall\tau_1.\forall\tau_2.\forall\iota.\forall\lambda. \\
\kern12pt
\iota\geq 0\Bimp
(\arrayView (\tau_1,\iota,\lambda) \Vand (\tptr(\lambda) * \tint(\iota) * (\tau_1\timp\tau_2)) \timp
\arrayView (\tau_2, \iota, \lambda)\Vand\tptr(\lambda))
\end{array}\]
\end{center}
With such a function, the allocation and initialization of an array can
clearly be separated. In Figure~\ref{figure:arrayMap_function}, we present
an implementation of in-place array map function in ATS.  Note that
$\arrayView$ is declared as a recursive stateful view constructor in
Figure~\ref{figure:arrayView}.\footnote{Though the notion of recursive
stateful view is not present in $\lamsvplus$, it should be understood that
this notion can be readily incorporated.} Note that for a proof $\pf$ of
view $\arrayView(T,I,L)$ for some type $T$, integer $I>0$ and address $L$,
the following syntax in Figure~\ref{figure:arrayMap_function} means that
$\pf$ is decomposed into two proofs $\pf_1$ and $\pf_2$ of views $T@L$ and
$\arrayView(T,I-1,L+1)$, respectively:
\begin{verbatim}
                  prval ArraySome (pf1, pf2) = pf
\end{verbatim}
The rest of the syntax in Figure~\ref{figure:arrayMap_function} should then
be easily accessible.

The next example we present is in Figure~\ref{figure:sllistReversal}, where
a recursive view constructor $\slsegView$ is declared.
Note that we write $(T_0,\ldots,T_n)@L$
for a sequence of views: $T_0@(L+0),\ldots,T_n@(L+n)$. Given a type $T$, an
integer $I$, and two addresses $L_1$ and $L_2$, $\slsegView(T,I,L_1,L_2)$ is a
view for a singly-linked list segment pictured as follows:
\begin{center}
\begin{pspicture}
\rput(0,1.5){\epsfysize=36pt\epsfbox{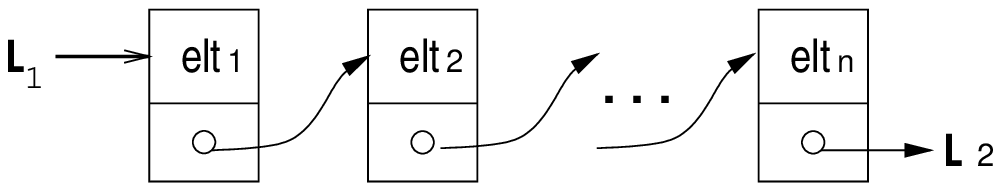}}
\end{pspicture}
\end{center}
such that (1) each element of the segment is of type $T$, and (2) the
length of the segment is $I$, and (3) the segment starts at $L_1$ and ends
at $L_2$.  There are two view proof constructors $\SlsegNone$ and
$\SlsegSome$ associated with $\slsegView$.  A singly-linked list is just a
special case of singly-linked list segment that ends at the null
address. Therefore, $\sllistView(T,I,L)$ is a view for a singly-linked list
pictured as follows:
\begin{center}
\begin{pspicture}
\rput(0.125,1.5){\epsfysize=36pt\epsfbox{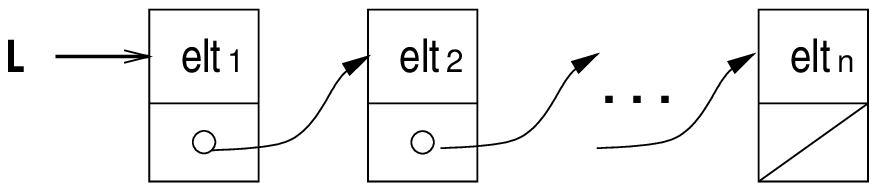}}
\end{pspicture}
\end{center}
such that each element in it is of type $T$ and its length is $I$.  To
demonstrate how such a view can be used in programming, we implement an
in-place reversal function on singly-linked-lists in
Figure~\ref{figure:sllistReversal}, which is given the following type:
\[\begin{array}{l}
\forall\tau.\forall\iota.\forall\lambda.
\iota\geq 0\Bimp(
\sllistView(\tau,\iota,\lambda)\Vand\tptr(\lambda)\timp
\exists\lambda (\sllistView(\tau,\iota,\lambda)\Vand\tptr(\lambda)))
\end{array}\]
indicating that this is a length-preserving function.

\section{Persistent Stateful Views}\label{section:PSV}
There is so far an acute problem with $\lamsvplus$ that we have not
mentioned. Given the linearity of stateful views, we simply can not support
pointer sharing in $\lamsvplus$. For instance, the kind of references in
ML, which require pointer sharing, can not be directly handled.\footnote{
We can certainly add into $\lamsvplus$ some primitives to support
references, but such a solution is inherently {\em ad hoc}.}  This problem
would impose a crippling limitation on stateful views if it could not be
properly resolved.  Fortunately, We have already found a solution to the
problem by introducing a notion of persistent stateful views, and $\atssv$
is essentially the type system that extends $\lamsvplus$ with persistent
stateful views.  For the sake of brevity, we refer the reader
to~\cite{ATSwSV} for the detailed theoretical development of persistent
stateful views.  In the following, we briefly present some simple intuition
behind our solution.

We start with an implementation of references in ATS. Essentially,
references can be regarded as a special form of pointers such that we have
no obligation to provide proofs (of views) when reading from or writing to
them. In ATS, the type constructor $\sccref$ for forming types for
references can be defined as follows:
\begin{verbatim}
// [!] stands for a box
typedef ref (a: type) = [l:addr] '(!(a @ l) | ptr l)
\end{verbatim}
In formal notation, given a type $T$, $\sccref(T)$ is defined to be
$\exists\lambda.(\square(T@\lambda)\mid\tptr(\lambda))$, where
$\square(T@\lambda)$ is a {\em persistent} stateful view. When compared to
the original stateful views, which we now refer to as {\em ephemeral}
stateful views, persistent stateful views are intuitionistic and proofs of
such views can be duplicated.  Given an ephemeral view $V$, we can
construct a persistent view $\square{V}$, which we may refer to as the
boxed $V$. Note that we use \texttt{!} for $\square$ in the concrete
syntax. At this point, we emphasize that it is incorrect to assume that
a persistent view $\square{V}$ implies the ephemeral view $V$.
Essentially, $\square$ acts as a form of modality, which restricts
the use of a boxed view.

\begin{figure}[t]
\fontsize{9}{10}\selectfont
\begin{verbatim}
// [getPtr0] is of the type: {a:type, l:addr} (a @ l | (*none*) | ptr l) -> a
fun getPtr0 {a:type, l:addr} (pf: a @ l | (*none*) | p: ptr l): a =
  getPtr (pf | p)

// [setPtr0] is of the type: {a:type, l:addr} (a @ l | (*none*) | ptr l, a) -> unit
fun setPtr0 {a:type, l:addr} (pf: a @ l | (*none*) | p: ptr l, x: a): unit =
  setPtr (pf | p, x)

fun getRef {a:type} (r: ref a): a =
  let val '(pf | p) = r in getPtr0 (pf | (*none*) | p) end

fun setRef {a:type} (r: ref a, x: a): unit =
  let val '(pf | p) = r in setPtr0 (pf | (*none*) | p, x) end
\end{verbatim}
\caption{Implementing references}
\label{figure:reference}
\end{figure}
Given a view $V$, we say that a function of type $V\Vand T_1\timp V\Vand
T_2$ treats the view $V$ as an invariant since the function consumes a
proof of $V$ and then produces another proof of the same $V$. What we can
formally show is that such a function can also be used as a function of
type $\square{V}\Vand T_1\timp T_2$. For instance, the function
$\dcfgetPtr$, which is given the type
$\forall\tau.\forall\lambda.(\tau@\lambda)\Vand\tptr(\lambda) \timp
(\tau@\lambda)\Vand\tau$, can be used as a function ($\dcfgetRef$) of type
$\forall\tau.\forall\lambda.\square(\tau@\lambda)\Vand\tptr(\lambda) \timp
\tau$ to read from a reference. Similarly, we can form a function
($\dcfsetRef$) of type
$\forall\tau.\forall\lambda.\square(\tau@\lambda)\Vand(\tptr(\lambda)*\tau)
\timp\tunit$ for writing to a reference. The actual implementation of
$\dcfgetRef$ and $\dcfsetRef$ are given in Figure~\ref{figure:reference}.
Note that in the concrete syntax, we write
\texttt{ (V1 | V2 | VT1) -> VT2 } for
\begin{center}
\texttt{ (V1, V2 | VT1) -> '(V1 | VT2) }
\end{center}
so as to indicate that $V_1$ is an invariant. Therefore, $\dcfgetPtrZ$
and $\dcfsetPtrZ$ in Figure~\ref{figure:reference} are declared to be of
the following types:
\[\begin{array}{rcl}
\dcfgetPtrZ &~~:~~&
\forall\tau.\forall\lambda.(\tau@\lambda)\Vand\tptr(\lambda)\timp(\tau@\lambda)\Vand\tau
\\
\dcfsetPtrZ &~~:~~&
\forall\tau.\forall\lambda.(\tau@\lambda)\Vand(\tptr(\lambda) * \tau_2)\timp(\tau@\lambda)\Vand\tunit
\\
\end{array}\]
Next we show that both product and sum types are implementable in ATS
through the use of persistent stateful views.

\begin{figure}[thp]
\fontsize{10}{11}\selectfont
\begin{verbatim}
typedef pair (a1: type, a2: type) = [l: addr] '(!(a1 @ l), !(a2 @ l+1) | ptr l)

fun makePair {a1:type, a2:type} (x1: a1, x2: a2): pair (a1, a2) =
  let
     val '(pf | p) = alloc (2)
     prval ArraySome (pf1, ArraySome (pf2, ArrayNone ())) = pf
     val '(pf1 | _) = setPtr (pf1 | p, x1)
     val '(pf2 | _) = setPtr (pf2 | p + 1, x2)
  in
     '(viewbox pf1, viewbox pf2 | p)
  end

fun getFst {a1:type, a2:type} (p: pair (a1, a2)): a1 =
  let val '(pf1, _ | p0) = p in getPtr0 (pf1 | (*none*) | p0) end

fun getSnd {a1:type, a2:type} (p: pair (a1, a2)): a2 =
  let val '(_, pf2 | p0) = p in getPtr0 (pf2 | (*none*) | p0 + 1) end

typedef sum (a1: type, a2: type) =
  [l: addr, i: int | i == 0 || i == 1]
    '(!(int (i) @ l), {i == 0} !(a1 @ l+1), {i == 1} !(a2 @ l+1) | ptr l)

// left injection
fun inl {a1: type, a2: type} (x: a1): sum (a1, a2) =
  let
     val '(pf | p) = alloc (2)
     prval ArraySome (pf1, ArraySome (pf2, ArrayNone ())) = pf
     val '(pf1 | _) = setPtr (pf1 | p, 0)
     val '(pf2 | _) = setPtr (pf2 | p + 1, x)
  in
     '(viewbox pf1, viewbox pf2, '() | p)
  end

// right injection
fun inr {a1: type, a2: type} (x: a2): sum (a1, a2) =
  let
     val '(pf | p) = alloc (2)
     prval ArraySome (pf1, ArraySome (pf2, ArrayNone ())) = pf
     val '(pf1 | _) = setPtr (pf1 | p, 1)
     val '(pf2 | _) = setPtr (pf2 | p + 1, x)
  in
     '(viewbox pf1, '(), viewbox pf2 | p)
  end

\end{verbatim}
\caption{implementations of product and sum}
\label{figure:product_sum_implementation}
\end{figure}
\subsection{Implementing Product and Sum}
In ATS, both product and sum are implementable in terms of other primitive
constructs.  For instance, product and sum are implemented in
Figure~\ref{figure:product_sum_implementation}.  In the implementation, the
type $\sccpair(T_1,T_2)$ for a pair with the first and second components of
types $T_1$ and $T_2$, respectively, is defined to be:
$$\exists\lambda.(\square(T_1@\lambda) \Band \square(T_2@\lambda+1))\Vand
\tptr(\lambda)$$ The function $\dcfmakePair$ is given the type
$\forall\tau_1.\forall\tau_2.(\tau_1,\tau_2)\timp\sccpair(\tau_1,\tau_2)$,
that is, it takes values of types $T_1$ and $T_2$ to form a value of type
$\sccpair(T_1,T_2)$.\footnote{In ATS, we support functions of multiple
arguments, which should be distinguished from functions that takes a tuple
as a single argument.}.  Note that $\scfviewbox$ is a primitive that turns
an ephemeral stateful view $V$ into a persistent stateful view
$\square{V}$.

The implementation of sum is more interesting. We define $T_1+T_2$ to be
$\exists\iota.\exists\lambda.(\iota=0\lor\iota=1)\Bimp V\Vand\tptr(\lambda)$, where
$V$ is given as follows:
\begin{center}
\[\begin{array}{c}
(\square(\tint(\iota) @ \lambda) \Band (\iota = 0\Bimp\square(T_1 @ \lambda+1)) \Band (\iota = 1\Bimp\square(T_2 @ \lambda+1)))
\end{array}\]
\end{center}
Note the use of guarded persistent stateful views here. Essentially, a
value of type $T_1+T_2$ is represented as a tag (which is an integer of
value $0$ or $1$) followed by a value of type $T_1$ or $T_2$ determined by
the value of the tag.  Both the left and right injections can be
implemented straightforwardly.  Given that recursive types are available in
ATS, datatypes as supported in ML can all be readily implemented in a
manner similar to the implementation of sum.

\section{Current Status of ATS}
We have finished a running implementation of ATS, which is currently
available on-line~\cite{ats-lang}, and the type system presented in this paper
is a large part of the implementation. At this moment, well-typed programs
in ATS are interpreted. We have so far gather some empirical evidence in
support of the practicality of programming with stateful views. For
instance, the library of ATS alone already contains more than 20,000 lines
of code written in ATS itself, involving a variety of data structures such
as cyclic linked lists and doubly-linked binary trees that make
(sophisticated) use of pointers.  In particular, the library code includes
a portion modeled after the Standard Template Library (STL) of
C++~\cite{STL}, and the use of stateful views (both ephemeral and
persistent) is ubiquitous in this portion of code.

\section{Related Work and Conclusion}
A fundamental issue in programming is on program verification, that is,
verifying (in an effective manner) whether a program meets its
specification.  In general, existing approaches to program verification can
be classified into two categories.  In one category, the underlying theme
is to develop a proof theory based Floyd-Hoare logic (or its variants) for
reasoning about imperative stateful programs.  In the other category, the
focus is on developing a type theory that allows the use of types in
capturing program properties.

While Floyd-Hoare logic has been studied for at least three decades
\cite{HOARE69,HOARE71}, its actual use in general software practice is
rare. In the literature, Floyd-Hoare logic is mostly employed to prove the
correctness of some (usually) short but often intricate programs, or to
identify some subtle problems in such programs. In general, it is still as
challenging as it was to support Floyd-Hoare logic in a realistic
programming language.  On the other hand, the use of types in capturing
program invariants is wide spread. For instance, types play a significant
r{\^o}le in many modern programming languages such as ML and Java.
However, we must note that the types in these programming languages are of
relatively limited expressive power when compared to Floyd-Hoare logic. In
Martin-L{\"o}f's constructive type theory~\cite{MARTINLOF84,NORDSTROM90},
dependent types offer a precise means to capture program properties, and
complex specifications can be expressed in terms of dependent types. If
programs can be assigned such dependent types, they are {\em guaranteed} to
meet the specifications.  However, because there exists no separation
between programs and types, that is, programs may be used to construct
types, a language based on Martin-L{\"o}f's type theory is often too pure
and limited to be useful for practical purpose.

In Dependent ML (DML), a restricted form of dependent types is proposed
that completely separates programs from types, this design makes it rather
straightforward to support realistic programming features such as general
recursion and effects in the presence of dependent types. Subsequently,
this restricted form of dependent types is used in designing
Xanadu~\cite{Xanadu} and DTAL~\cite{DTALabs} so as to reap similar benefits
from dependent types in imperative programming.  In hindsight, the type
system of Xanadu can be viewed as an attempt to combine type theory with
Floyd-Hoare logic.

In Xanadu, we follow a strategy in Typed Assembly Language (TAL)~\cite{TAL}
to statically track the changes made to states during program evaluation. A
fundamental limitation we encountered is that this strategy only allows the
types of the values stored at a fixed number of addresses to be tracked in
any given program, making it difficult, if not entirely impossible, to
handle data structures such as linked lists in which there are an
indefinite number of pointers involved. We have seen several attempts made
to address this limitation.  In~\cite{sagiv98solving}, finite shape graphs
are employed to approximate the possible shapes that mutable data
structures (e.g., linked lists) in a program can take on. A related
work~\cite{AliasTypes} introduces the notion of alias types to model
mutable data structures such as linked lists. However, the notion of view
changes in $\atssv$ is not present in these works. For instance, an alias
type can be readily defined for circular lists, but it is rather unclear
how to program with such an alias type. As a comparison, a view can be
defined as follows in $\atssv$ for circular lists of length $n$:
\begin{verbatim}
viewdef circlistView (a:type,n:int,l:addr) = slsegView (a,n,l,l)
\end{verbatim}
With properly defined functions for performing view changes, we can easily
program with circular lists. For instance, we have finished a queue
implementation based on circular lists~\cite{ats-lang}.

Along a related but different line of research, separation
logic~\cite{SeparationLogic} has recently been introduced as an extension
to Hoare logic in support of reasoning on mutable data structures. The
effectiveness of separation logic in establishing program correctness is
convincingly demonstrated in various nontrivial examples (e.g.,
singly-linked lists and doubly-linked lists).  It can be readily noticed
that proofs formulated in separation logic in general correspond to the
functions in $\atssv$ for performing view changes, though a detailed
analysis is yet to be conducted.  In a broad sense, $\atssv$ can be viewed
as a novel attempt to combine type theory with (a form of) separation
logic.  In particular, the treatment of functions as first-class values is
a significant part of $\atssv$, which is not addressed in separation logic.
Also, we are yet to see programming languages (or systems ) that can
effectively support the use of separation logic in practical programming.

There is a large body of research on applying linear type theory based on
linear logic~\cite{GIRARD87} to memory management
(e.g.~\cite{WADLER90,chirimar96reference,turner99operational,kobayashi99quasi,igarashi00garbage,Hofmann2000}),
and the work~\cite{petersen03a} that attempts to give an account for data
layout based on ordered linear logic~\cite{polakow99relating} is closely
related to $\atssv$ in the aspect that memory allocation and data
initialization are completely separated.  However, due to the rather
limited expressiveness of ordered linear logic, it is currently unclear how
recursive data structures such as arrays and linked lists can be properly
handled.

There have been a large number of studies on verifying program safety
properties by tracking state changes. For instance, Cyclone~\cite{Cyclone}
allows the programmer to specify safe stack and region memory allocation;
both CQual~\cite{CQual} and Vault~\cite{Vault} support some form of
resource usage protocol verification; ESC~\cite{ESC} enables the programmer
to state various sorts of program invariants and then employs theorem
proving to prove them; CCured~\cite{CCured} uses program analysis to show
the safety of mostly unannotated C programs.  In particular, the type
system of Vault also rests on (a form of) linear logic, where two
constructs {\em adoption} and {\em focus} are introduced to reduce certain
conflicts between linearity and sharing. Essentially, {\em focus}
temporarily provides a linear view on an object of nonlinear type while
{\em adoption} does the opposite, and our treatment of persistent stateful
views bears some resemblance to this technique. However, the underlying
approaches taken in these mentioned studies are in general rather different
from ours and a direct comparison seems difficult.

In this paper, we are primarily interested in providing a framework based
on type theory to reason about program states. This aspect is also shared
in the research on an effective theory of type
refinements~\cite{mandelbaum+:effref}, where the aim is to develop a
general theory of type refinements for reasoning about program
states. Also, the formalization of $\atssv$ bears considerable resemblance
to the formalization of the type system in~\cite{mandelbaum+:effref}.  This
can also be said about the work in~\cite{LogicalApproachToStackTyping},
where the notion of primitive stateful view like $T@L$ is already present
and there are also various logic connectives for combining primitive
stateful views.  However, the notions such as recursive stateful views
(e.g., $\arrayView$) and view changes, which constitute the key
contributions of this paper, have no counterparts in
either~\cite{mandelbaum+:effref} or~\cite{LogicalApproachToStackTyping}.
Recently, a linear language with locations ($L^3$) is presented by
Morrisett et al~\cite{LLL}, which attempts to explore foundational typing
support for strong updates. In $L^3$, stateful views of the form $T@L$ are
present, but recursive stateful views are yet to be developed.  The notion
of {\em freeze} and {\em thaw} in $L^3$ seems to be closely related to our
handling of persistent stateful views, but we have also noticed some
fundamental differences. For instance, the function $\scfviewbox$ that
turns an ephemeral stateful view into a persistent stateful view seems to
have no counterpart in $L^3$.

Another line of related studies are in the area of shape
analysis~\cite{sagiv98solving,TVLA,ShapeAnalysisViaThreeValuedLogic}. While
we partly share the goal of shape analysis, the approach we take is
radically different from the one underlying shape analysis. Generally
speaking, TVLA performs fixed-point iterations over abstract descriptions
of memory layout.  While it is automatic (after an operational semantics is
specified in 3-valued logic for a collection of primitive operations on the
data structure in question), it may lose precision when performing
fixed-point iteration and thus falsely reject programs. Also, many
properties that can be captured by types in $\atssv$ seem to be beyond the
reach of TVLA. For instance, the type of the list reversal function in
Figure~\ref{figure:sllistReversal} states that it is length-preserving, but
this is difficult to do in TVLA.  Overall, it probably should be said that
$\atssv$ (type theory) and shape analysis (static analysis) are
complementary.

In~\cite{PointerAssertionLogicEngine}, a framework is presented for
verifying partial program specifications in order to capture type and
memory errors as well as to check data structure invariants.  In general, a
data structure can be handled if it can be described in terms of graph
types~\cite{GraphTypes}. Programs are annotated with partial specifications
expressed in Pointer Assertion Logic.  In particular, loop and function
call invariants are required in order to guarantee the decidability of
verification. This design is closely comparable to ours given that
invariants correspond to types and verification corresponds to
type-checking.  However, arithmetic invariants are yet to be supported in
the framework.

In summary, we have presented the design and formalization of a type system
$\atssv$ that makes use of stateful views in capturing invariants in
stateful programs that may involve (sophisticated) pointer manipulation. We
have not only established the type soundness of $\atssv$ but also given a
variety of running examples in support of programming with stateful views.
We are currently keen to continue the effort in building the programming
language ATS, making it suitable for both high-level and low level
programming.  With $\atssv$, we believe that a solid step toward reaching
this goal is made.

\newcommand{\etalchar}[1]{$^{#1}$}


\begin{thebibliography}{MWCG99}

\bibitem[AW03]{LogicalApproachToStackTyping}
Amal Ahmed and David Walker.
\newblock {The Logical Approach to Stack Typing}.
\newblock In {\em Proceedings of ACM SIGPLAN Workshop on Types in Language
  Design and Implementation (TLDI)}, pages 74--85. ACM press, January 2003.

\bibitem[CGR96]{chirimar96reference}
Jawahar Chirimar, Carl~A. Gunter, and G.~Riecke.
\newblock {Reference Counting as a Computational Interpretation of Linear
  Logic}.
\newblock {\em Journal of Functional Programming}, 6(2):195--244, 1996.

\bibitem[CSX04]{OBJwMIpadl04}
Chiyan Chen, Rui Shi, and Hongwei Xi.
\newblock {A Typeful Approach to Object-Oriented Programming with Multiple
  Inheritance}.
\newblock In {\em Proceedings of the 6th International Symposium on Practical
  Aspects of Declarative Languages}, Dallas, TX, June 2004. Springer-Verlag
  LNCS vol. 3057.

\bibitem[CX03]{MPicfp03}
Chiyan Chen and Hongwei Xi.
\newblock {Meta-Programming through Typeful Code Representation}.
\newblock In {\em {Proceedings of the Eighth ACM SIGPLAN International
  Conference on Functional Programming}}, pages 169--180, Uppsala, Sweden,
  August 2003.

\bibitem[CX05]{CPwTP}
Chiyan Chen and Hongwei Xi.
\newblock {Combining Programming with Theorem Proving}.
\newblock In {\em Proceedings of the Tenth ACM SIGPLAN International Conference
  on Functional Programming}, pages 66--77, Tallinn, Estonia, September 2005.

\bibitem[Det96]{ESC}
David Detlefs.
\newblock An overview of the extended static checking system.
\newblock In {\em Workshop on Formal Methods in Software Practice}, 1996.

\bibitem[FD02]{Vault}
M.~Fahndrich and R.~Deline.
\newblock {Adoption and Focus: Practical Linear Types for Imperative
  Programming}.
\newblock In {\em Proceedings of the ACM Conference on Programming Language
  Design and Implementation}, pages 13--24, Berlin, June 2002.

\bibitem[FTA02]{CQual}
J.~Foster, T.~Terauchi, and A.~Aiken.
\newblock {Flow-sensitive Type Qualifiers}.
\newblock In {\em ACM Conference on Programming Language Design and
  Implementation}, pages 1--12, Berlin, June 2002.

\bibitem[Gir87]{GIRARD87}
Jean-Yves Girard.
\newblock Linear logic.
\newblock {\em Theoretical Computer Science}, 50(1):1--101, 1987.

\bibitem[Hoa69]{HOARE69}
C.~A.~R. Hoare.
\newblock An axiomatic basis for computer programming.
\newblock {\em Communications of the ACM}, 12(10):576--580 and 583, October
  1969.

\bibitem[Hoa71]{HOARE71}
C.~A.~R. Hoare.
\newblock Procedures and parameters: An axiomatic approach.
\newblock In Erwin Engeler, editor, {\em Symposium on Semantics of Algorithmic
  Languages}, volume 188 of {\em Lecture Notes in Mathematics}, pages 102--116.
  Springer-Verlag, Berlin, 1971.

\bibitem[Hof00]{Hofmann2000}
Martin Hofmann.
\newblock A type system for bounded space and functional in-place update.
\newblock {\em Nordic Journal of Computing}, 7(4):258--289, Winter 2000.

\bibitem[IK00]{igarashi00garbage}
Atsushi Igarashi and Naoki Kobayashi.
\newblock {Garbage Collection Based on a Linear Type System}.
\newblock In {\em Preliminary Proceedings of the 3rd ACM SIGPLAN Workshop on
  Types in Compilation (TIC '00)}, September 2000.

\bibitem[JMG{\etalchar{+}}01]{Cyclone}
Trevor Jim, Greg Morrisett, Dan Grossman, Mike Hicks, Mathieu Baudet, Matthew
  Harris, and Yanling Wang.
\newblock {Cyclone, a Safe Dialect of C}, 2001.
\newblock Available
  at:\hfill\break\texttt{http://www.cs.cornell.edu/Projects/cyclone/}.

\bibitem[Kob99]{kobayashi99quasi}
Naoki Kobayashi.
\newblock {Quasi-linear types}.
\newblock In {\em Proceedings of the 26th ACM Sigplan Symposium on Principles
  of Programming Languages (POPL '99)}, pages 29--42, San Antonio, Texas, USA,
  1999.

\bibitem[KS93]{GraphTypes}
Nils Klarlund and Michael Schwartzbach.
\newblock Graph types.
\newblock In {\em Proceedings of 20th Annual ACM Symposium on Principles of
  Programming Languages (POPL '93)}, pages 49--54, January 1993.

\bibitem[LAS00]{TVLA}
Tal Lev-Ami and Mooly Sagiv.
\newblock {TVLA: A system for implementing static analyses}.
\newblock In {\em Proceedings of the 7th Static Analysis Symposium (SAS '00)},
  2000.

\bibitem[MAF05]{LLL}
Greg Morrisett, Amal Ahmed, and Matthew Fluet.
\newblock {L$^3$: A Linear language with locations}.
\newblock In {\em Proceedings of Typed Lambda Calculi and Applications
  (TLCA'05)}, Nara, Japan, April 2005.
\newblock To appear.

\bibitem[{Mar}84]{MARTINLOF84}
Per {Martin-L\"{o}f}.
\newblock {\em Intuitionistic Type Theory}.
\newblock Bibliopolis, Naples, Italy, 1984.

\bibitem[MS01]{PointerAssertionLogicEngine}
Anders M{\o}ller and Michael Schwartzbach.
\newblock The pointer assertion logic.
\newblock In {\em Proceedings of ACM SIGPLAN Conference on Programming Language
  Design and Implementation (PLDI '01)}, 2001.

\bibitem[MWCG99]{TAL}
Greg Morrisett, David Walker, Karl Crary, and Neal Glew.
\newblock {From System F to Typed Assembly Language}.
\newblock {\em ACM Transactions on Programming Languages and Systems},
  21(3):527--568, May 1999.

\bibitem[MWH03]{mandelbaum+:effref}
Yitzhak Mandelbaum, David Walker, and Robert Harper.
\newblock An effective theory of type refinements.
\newblock In {\em Proceedings of the Eighth {ACM} {SIGPLAN} International
  Conference on Functional Programming}, pages 213--226, Uppsala, Sweden,
  September 2003.

\bibitem[NMW02]{CCured}
George~C. Necula, Scott McPeak, and Westley Weimer.
\newblock {CCured: Type-Safe Retrofitting of Legacy Code}.
\newblock In {\em Proceedings of the 29th ACM Symposium on Principles of
  Programming Languages}, pages 128--139, London, January 2002.

\bibitem[NPS90]{NORDSTROM90}
Bengt {Nordstr\"{o}m}, Kent Petersson, and Jan~M. Smith.
\newblock {\em Programming in {Martin-L\"{o}f's} Type Theory}, volume~7 of {\em
  International Series of Monographs on Computer Science}.
\newblock Clarendon Press, Oxford, 1990.

\bibitem[PHCP03]{petersen03a}
Leaf Petersen, Robert Harper, Karl Crary, and Frank Pfenning.
\newblock {A Type Theory for Memory Allocation and Data Layout}.
\newblock In {\em Proceedings of the 30th ACM Sigplan Symposium on Principles
  of Programming Languages (POPL '03)}, pages 172--184, New Orleans, Louisiana,
  USA, January 2003.

\bibitem[PP99]{polakow99relating}
Jeff Polakow and Frank Pfenning.
\newblock Relating natural deduction and sequent calculus for intuitionistic
  non-commutative linear logic.
\newblock In Andre Scedrov and Achim Jung, editors, {\em Proceedings of the
  15th Conference on Mathematical Foundations of Programming Semantics}, New
  Orleans, Louisiana, April 1999.
\newblock {Electronic Notes in Theorectical Computer Science, Volume 20}.

\bibitem[PSL00]{STL}
P.~J. Plauger, Alexander~A. Stepanov, and Meng Lee.
\newblock {\em The C++ Standard Template Library}.
\newblock Prentice Hall PTR, December 2000.

\bibitem[Rey02]{SeparationLogic}
John Reynolds.
\newblock {Separation Logic: a logic for shared mutable data structures}.
\newblock In {\em Proceedings of 17th IEEE Symposium on Logic in Computer
  Science (LICS '02)}, pages 55--74. IEEE Computer Society, 2002.

\bibitem[SRW98]{sagiv98solving}
Mooly Sagiv, Thomas Reps, and Reinhard Wilhelm.
\newblock {Solving shape-analysis problems in languages with destructive
  updating }.
\newblock {\em ACM Transactions of Programming Languages and Systems (TOPLAS)},
  20(1):1--50, January 1998.

\bibitem[SRW02]{ShapeAnalysisViaThreeValuedLogic}
Mooly Sagiv, Thomas Reps, and Reinhard Wilhelm.
\newblock {Parametric Shape Analysis via 3-Valued Logic}.
\newblock {\em ACM Transactions of Programming Languages and Systems (TOPLAS)},
  24(3):217--298, 2002.

\bibitem[TW99]{turner99operational}
David~N. Turner and Philip Wadler.
\newblock Operational interpretations of linear logic.
\newblock {\em Theoretical Computer Science}, 227(1--2):231--248, October 1999.

\bibitem[Wad90]{WADLER90}
Philip Wadler.
\newblock Linear types can change the world.
\newblock In {\em TC 2 Working Conference on Programming Concepts and Methods
  (Preprint)}, pages 546--566, Sea of Galilee, 1990.

\bibitem[WM00]{AliasTypes}
David Walker and Greg Morrisett.
\newblock {Alias Types for Recursive Data Structures}.
\newblock In {\em Proceedings of International Workshop on Types in
  Compilation}, pages 177--206. Springer-Verlag LNCS vol. 2071, September 2000.

\bibitem[XCC03]{GRDC03}
Hongwei Xi, Chiyan Chen, and Gang Chen.
\newblock {Guarded Recursive Datatype Constructors}.
\newblock In {\em {Proceedings of the 30th ACM SIGPLAN Symposium on Principles
  of Programming Languages}}, pages 224--235, New Orleans, LA, January 2003.
  ACM press.

\bibitem[XH01]{DTALabs}
Hongwei Xi and Robert Harper.
\newblock {A Dependently Typed Assembly Language}.
\newblock In {\em Proceedings of International Conference on Functional
  Programming}, pages 169--180, September 2001.

\bibitem[Xi00]{Xanadu}
Hongwei Xi.
\newblock {Imperative Programming with Dependent Types}.
\newblock In {\em Proceedings of 15th IEEE Symposium on Logic in Computer
  Science}, pages 375--387, Santo Barbara, CA, June 2000.

\bibitem[Xi02]{XiHOSC02}
Hongwei Xi.
\newblock {Dependent Types for Program Termination Verification}.
\newblock {\em Journal of Higher-Order and Symbolic Computation},
  15(1):91--132, March 2002.

\bibitem[Xi04a]{ATS-types03}
Hongwei Xi.
\newblock {Applied Type System (extended abstract)}.
\newblock In {\em post-workshop Proceedings of TYPES 2003}, pages 394--408.
  Springer-Verlag LNCS 3085, 2004.

\bibitem[Xi04b]{ATSdml}
Hongwei Xi.
\newblock {Dependent Types for Practical Programming via Applied Type System}.
\newblock Available
  at:\hfill\break\texttt{http://www.cs.bu.edu/\char126hwxi/academic/drafts/ATSdml.ps},
  September 2004.

\bibitem[Xi08]{ats-lang}
Hongwei Xi.
\newblock {The ATS Programming Language System}, 2008.
\newblock Available at:\hfill\break \texttt{http://www.ats-lang.org/}.

\bibitem[Xi09]{ModTyp}
Hongwei Xi.
\newblock {A Simple and General Theoretical Account for Abstract Types}.
\newblock In {\em Proceedings of Brazilian Symposium on Formal Methods}, pages
  336--349, Gramado, Brazil, August 2009. Springer-Verlag LNCS 5902.

\bibitem[XZ04]{VsTsVTs}
Hongwei Xi and Dengping Zhu.
\newblock {Views, Types and Viewtypes}, October 2004.
\newblock Available at:\hfill\break
  \texttt{http://www.cs.bu.edu/\char126hwxi/ATS/PAPER/VsTsVTs.ps}.

\bibitem[XZL05]{ATSwSV}
Hongwei Xi, Dengping Zhu, and Yanka Li.
\newblock {Applied Type System with Stateful Views}.
\newblock Technical Report BUCS-2005-03, Boston University, 2005.
\newblock Available at:\hfill\break
  \texttt{http://www.cs.bu.edu/\char126hwxi/ATS/PAPER/ATSwSV.ps}.

\bibitem[ZX05]{zx-padl05}
Dengping Zhu and Hongwei Xi.
\newblock {Safe Programming with Pointers through Stateful Views}.
\newblock In {\em Proceedings of the 7th International Symposium on Practical
  Aspects of Declarative Languages}, pages 83--97, Long Beach, CA, January
  2005. Springer-Verlag LNCS, 3350.

\end{thebibliography}
\end{document}

\begin{figure}
\begin{center}
\begin{minipage}{15cm}
\begin{verbatim}
fun swap (pf1: T1 @ L1, pf2: T2 @ L2 | p1: ptr L1, p2: ptr L2)
  : (T1 @ L2, T2 @ L1 | unit) =
  let
     val '(pf1' | v1) = read (pf1 | p1)
     val '(pf2' | v2) = read (pf2 | p2)
     val '(pf1'' | _) = write (pf1' | p1, v)
     val '(pf2'' | _) = write (pf2' | p1, v)
  in
     '(pf1'', pf2'' | '())
  end
\end{verbatim}
\end{minipage}
\end{center}
\end{figure}